\newlength{\dhatheight}
 \theoremstyle{plain} \newtheorem{Theorem}{Theorem}
\theoremstyle{plain} \newtheorem{Corollary}{Corollary}
\theoremstyle{plain} \newtheorem{Lemma}{Lemma}
\theoremstyle{plain} \newtheorem{KeyLemma}[Lemma]{Key Lemma}
\theoremstyle{plain} \newtheorem{Proposition}{Proposition}
\theoremstyle{definition} \newtheorem{Def}{Definition}
\theoremstyle{plain} \newtheorem{LemDef}[Def]{Lemma and Definition}
\theoremstyle{definition}\newtheorem{Remark}{Remark}
\theoremstyle{definition}
\theoremstyle{definition} \newtheorem{Example}{Example}
\newcommand{\dist}{\operatorname{dist}}
\newcommand{\dm}{d}
\newcommand{\odm}{\bar{\dm}}
\newcommand{\id}{{\mathbb{I}}}
\DeclareMathOperator{\Tr}{Tr} 
\DeclareMathOperator*{\argmin}{arg\,min}
\newcommand{\ro}{{\mathbb{R}}}
\newcommand{\ra}{\rightarrow}
\newcommand{\no}{\nonumber}
\newcommand{\spann}{\operatorname{span}}
\newcommand{\for}{\mbox{ for }}
\newcommand{\Lra}{\Leftrightarrow}
\newcommand{\SO}{{\mathrm{SO(3)}}}
\newcommand{\Sl}{{\mathrm{SL(3,\mathbb{Z})}}}
\newcommand{\Slk}{{\mathrm{SL^k(3,\mathbb{Z})}}}
\newcommand{\SL}[1]{{\mathrm{SL^{#1}(3,\mathbb{Z})}}}
\newcommand{\zp}{\mathbb{Z}^{3\times 3}_+}
\newcommand{\slz}{\mathrm{SL(3,\mathbb{Z})}}
\newcommand{\glp}{\mathrm{GL^+(3,\mathbb{R})}}
\newcommand{\col}[1]{\mathrm{col}[\mathnormal{#1}]}
\newcommand{\rr}{\mathbb{R}^{3\times 3}}
\newcommand{\R}{{\mathbb{R}^3}}
\newcommand{\s}{\sigma}
\newcommand{\diag}{\operatorname{diag}}
\newcommand{\al}{\alpha}
\mathchardef\mdash="2D
\newcommand{\raf}{\ra \infty}
\newcommand{\ssp}[1]{\langle #1 \rangle}
\newcommand{\tinf}[1]{\| #1 \|_{2,\infty}}
\newcommand{\PP}{\mathcal{P}^{24}}
\newcommand{\aand}{\mbox{ and }}
\newcommand{\ml}{\mathcal{L}}
\newcommand{\ac}{{\mbox{\tiny{AC}}}}
\newcommand{\dop}{\ensuremath{\operatorname{d}}}
\newcommand{\la}{\lambda}
\newcommand{\etc}{{\normalfont{[...]}\ }}
\newcommand{\quo}[1]{\emph{``#1''}}
\newcommand{\icon}{$\ast$}
\newcommand{\T}{^\mathrm{T}\!} 
\def\signed #1{{\leavevmode\unskip\nobreak\hfil\penalty50\hskip2em
  \hbox{}\nobreak\hfil #1%
  \parfillskip=0pt \finalhyphendemerits=0 \endgraf}}
\newsavebox\mybox
\newenvironment{aquote}[1]
  {\savebox\mybox{#1}\begin{quote}}
  {\signed{\usebox\mybox}\end{quote}}
\def\@maketitle{%
  \newpage
  \null
  \vskip 2em%
  \begin{center}%
  \let \footnote \thanks
    {\Large\bf  \@title \par}%
    \vskip 1.5em%
    {\normalsize
      \lineskip .5em%
      \begin{tabular}[t]{c}%
        \@author
      \end{tabular}\par}%
    \vskip 1em%
    {\normalsize \@date}%
  \end{center}%
  \par
  \vskip 1.5em}
\title{Optimality of General Lattice Transformations with Applications to the Bain Strain in Steel}
\author{Konstantinos Koumatos%
  \thanks{\texttt{konstantinos.koumatos@gssi.infn.it}}}
\affil{\small\textit{Gran Sasso Science Institute, }\\ \textit{Viale Fransesco Crispi 7,} \\ \textit{67100 L'Aquila, Italy}}
\author{Anton Muehlemann%
  \thanks{\texttt{muehlemann@maths.ox.ac.uk}}}
\affil{\small\textit{Mathematical Institute, University of Oxford,}\\\textit{ Andrew Wiles Building, Radcliffe Observatory Quarter, Woodstock Road,} \\ \textit{Oxford OX2 6GG, United Kingdom}}
\date{Dated: \today}
\begin{document}

\maketitle
\begin{abstract}
 This article provides a rigorous proof of a conjecture by E.C. Bain in 1924 on the optimality of the so-called ``Bain strain'' based on a criterion of least atomic movement. A general framework that explores several such optimality criteria is introduced and employed to show the existence of optimal transformations between any two Bravais lattices. A precise algorithm and a \hyperref[Matlab]{GUI} to determine this optimal transformation is provided. Apart from the Bain conjecture concerning the transformation from face-centred cubic to body-centred cubic, applications include the face-centred cubic to body-centred tetragonal transition as well as the transformation between two triclinic phases of Terephthalic Acid.
 \vspace{4pt}

\noindent\textsc{MSC (2010): 74N05, 74N10} 

\noindent\textsc{Keywords:} lattice transformation, least atomic movement, Bravais lattices, Bain strain in steel, fcc-to-bcc, Terephthalic Acid

\vspace{4pt}
\end{abstract}
\newpage
\tableofcontents
\section{Introduction}
 In his seminal article on ``The nature of martensite'' Bain \cite{Bain} proposed a mechanism that transforms the face-centred cubic (fcc) lattice to the body-centred cubic (bcc) lattice, a phase transformation most importantly manifested in low carbon steels. He writes:
\begin{aquote}{Bain, 1924}
 \quo{It is reasonable, also, that the atoms themselves will rearrange \etc by a method that will require least temporary motion. \etc A mode of atomic shift requiring minimum motion was conceived by the author \etc} 
\end{aquote}
The key observation that led to his famous correspondence was that \quo{If one regards the centers of faces as corners of a new unit, a body-centered structure is already at hand; however, it is tetragonal instead of cubic}. He remarks that this is not surprising \quo{as it is the only easy method of constructing a bcc atomic structure from the fcc atomic structure}.

Even though now widely accepted, his mechanism, which he illustrated with a model made of cork balls and needles (see Fig.~\ref{BainOrg}), was not without criticism from his contemporaries. In their fundamental paper Kurdjumov \& Sachs \cite{KS} wrote [free translation from German] that ``nothing certain about the mechanism of the martensite transformation is known. Bain imagines that a tetragonal unit cell within the fcc lattice transforms into a bcc unit cell through compression along one direction and expansion along the two other. However a proof of this hypothesis is still missing''.\footnote{\quo{\"Uber den Mechanismus dieser ,,Martensitumwandlung" ist bisher
nichts Sicheres bekannt. Bain stellt sich vor, da{\ss} eine tetra- gonalk\"orperzentrierte
Elementarzelle des Austenits durch Schrumpfung in der
einen Richtung und Ausdehnung in den beiden anderen in die kubischraumzentrierte
des $\al$-Eisens \"ubergeht. Eine Best\"atigung f\"ur diese Anschauung
konnte bisher nicht erbracht werden.}} Interestingly, without being aware of it, the authors implicitly used the Bain mechanism in their derivation of the Kurdjumov \& Sachs orientation relationships (see \cite{KSpapersmall} for details).
  \begin{figure}[h]
  \centering
  {\includegraphics[height=5cm]{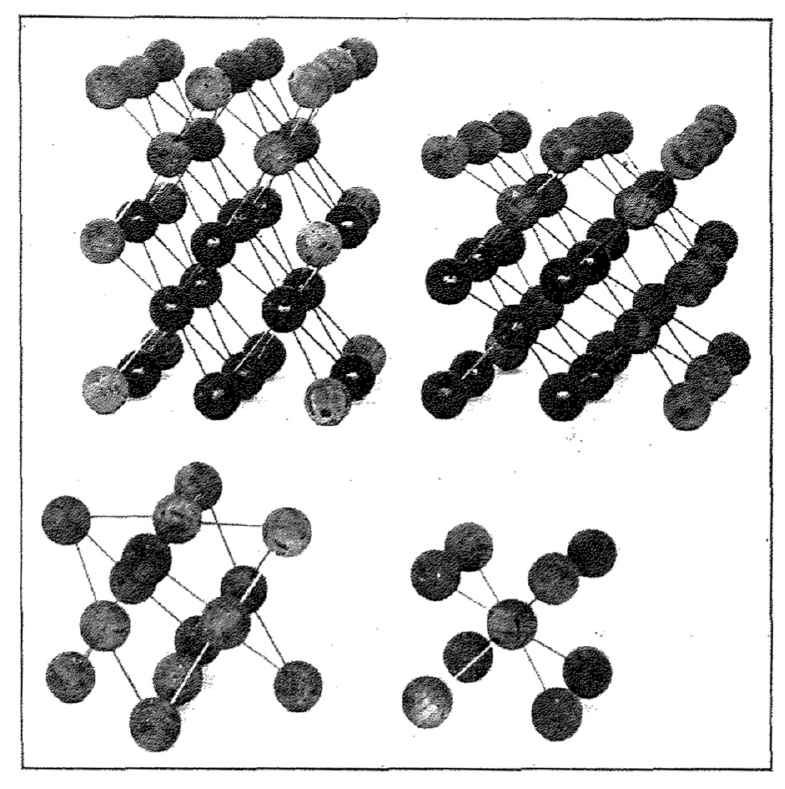}} 
  \caption{From E.C. Bain: the small models show the fcc and bcc unit cells; the large models represent 35 atoms in an fcc and bcc arrangement respectively.} \label{BainOrg}
\end{figure}

\noindent In subsequent years, the determination of the transformation mechanism remained of great interest. In their paper on ``Atomic Displacements in the Austenite-Martensite Transformation`` \cite{Jaswon} Jaswon and Wheeler again acknowledged that 
\begin{aquote}{Jaswon and Wheeler, 1948}
 \quo{Off all the possible distortions of a primitive unit cell of the face-centred cubic structure, which could generate a body-centred cubic structure of the given relative orientation, the one which actually occurs is the smallest} 
\end{aquote}
By combining it with experimental observations of the orientation relationships they devised an algorithm to derive the strain tensor. However their approach is only applicable to cases where the orientation relationship is known a-priori.

With the years passing and a number of supporting experimental results (for a discussion see e.g. \cite{BowlesWayman}) the Bain mechanism rose from a conjecture to a widely accepted fact. Nevertheless, almost a century after Bain first announced his correspondence a rigorous proof based on the assumption of minimal atom movement has been missing.
Of course, the transformation from fcc-to-bcc is not the only instance where the determination of the transformation strain is of interest. The overall question remains the same: Which transformation strain(s), out of all the possible deformations mapping the lattice of the parent phase to the lattice of the product phase, require(s) the least atomic movement?

To provide a definite answer to this question one first needs to quantify the notion of least atomic movement in such a way that it does not require additional input from experiments. Then one needs to establish a framework that singles out the optimal transformation among the infinite number of possible lattice transformation strains. One way to appropriately quantify least atomic movement is the criterion of smallest principal strains as suggested by Lomer in \cite{Lomer}. In his paper, Lomer compared $1600$ different lattice correspondences for the $\beta$ to $\al$ phase transition in Uranium and concluded that only one of them involved strains of less than $10\%$. More recently, in \cite{Sherry} an algorithm is proposed to determine the transformation strain based on a similar minimality criterion (see Remark~\ref{RemarkDistances}) that also allows for the consideration of different sublattices. The present paper considers a criterion of least atomic movement in terms of a family of different strain 
measures and, for each such strain measure, rigorously proves the existence of an optimal lattice transformation between \emph{any} two Bravais lattices.\footnote{In particular, no assumptions are made on the type of lattice points (e.g. atoms, molecules) or on the relation between the point groups of the two lattices.} As a main application, it is shown that the Bain strain is the optimal lattice transformation from fcc-to-bcc with respect to three of the most commonly used strain measures.

The structure of the paper is as follows: after stating some preliminaries in Section~\ref{SecPre} we explore in more depth some mathematical aspects of lattices in Section~\ref{SecMet}. This section is mainly intended for the mathematically inclined reader and may be skipped on first reading without inhibiting the understanding of Section~\ref{SecOpt}, which constitutes the main part of this paper. In this section we establish a geometric criterion of optimality and prove the existence of optimal lattice transformations for any displacive phase transition between two Bravais lattices. Additionally a precise algorithm to compute these optimal strains is provided. In the remaining subsections, the general theory is applied to prove the optimality of the Bain strain in an fcc-to-bcc transformation, to show that the Bain strain remains optimal in an fcc to body-centred tetragonal (bct) transformation and finally to derive the optimal transformation strain between two triclinic phases 
of Terephthalic Acid. Similarly to the fcc-to-bcc transition, this phase transformation is of particular interest as it involves large stretches and thus the lattice transformation requiring least atomic movement is not clear.      
\newpage
\section{Preliminaries}\label{SecPre}
Throughout it is assumed that both the parent and product lattices are Bravais lattices (see Definition~\ref{DefBrav}) and that the transformation strain, i.e. the deformation that maps a unit cell of the parent lattice to a unit cell of the product lattice, is homogeneous. 

The following definitions are standard and will be used throughout. 
\begin{Def}
Let $\mathcal{R}\in \lbrace \mathbb{Z},\mathbb{R}\rbrace$ denote the set of integer or real numbers respectively and define
\begin{itemize}
 \item[\icon]  $\mathcal{R}^{3 \times 3}$: Vector space of matrices with entries in $\mathcal{R}$.
 \item[\icon]  $\mathcal{R}^{3 \times 3}_+\colonequals \mathcal{R}^{3 \times 3} \cap \lbrace \det >0\rbrace$: Orientation preserving matrices with entries in $\mathcal{R}$. 
  \item[\icon]  $\mathrm{GL}(3,\mathcal{R})\colonequals \mathcal{R}^{3 \times 3} \cap \lbrace A\in  \mathcal{R}^{3 \times 3} \mbox{ is invertible}\rbrace$ \emph{(General linear group)}.
\item[\icon]   $\mathrm{GL}^+(3,\mathcal{R})\colonequals \mathrm{GL}(3,\mathcal{R}) \cap \mathcal{R}^{3 \times 3}_+$: Set of orientation preserving invertible $3\times 3$ matrices with entries in $\mathcal{R}$.
\item[\icon]  $\mathrm{SL}(3,\mathbb{Z})\colonequals \mathrm{GL}^+(3,\mathbb{Z})=\lbrace A \in \mathbb{Z}^{3\times 3}: \det A=1\rbrace$ \emph{(Special linear group)}.
\item[\icon]  $\SO=\lbrace R\in \rr: R\T R=\id, \det R=1\rbrace$ \emph{(Group of proper rotations)}.
\item[\icon]  $\PP \subset  \SO$ \emph{(Symmetry group of a cube)} - see Lemma~\ref{lemmap24} below.
\end{itemize}
Further define the multiplication of a matrix $F$ and a set of matrices $\mathcal{S}$ by $F.\mathcal{S}\colonequals\lbrace FS: S\in \mathcal{S}\rbrace$. 
\end{Def}
The following Lemma establishes a characterisation of the group $\PP$, i.e. the set of all rotations that map a cube to itself.
\begin{Lemma}\label{lemmap24}
Let $Q = [-1,1]^3$ be the cube of side length 2 centred at $0$ and define $\PP = \{P\in \SO : PQ=Q\}$. Then $|\PP|=24$ and
 $\PP = \SO \cap \Sl$. 
\end{Lemma}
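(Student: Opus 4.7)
The plan is to establish the equality $\PP = \SO \cap \Sl$ by double inclusion, and then deduce the cardinality by a direct enumeration of the right-hand side.

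First I would argue the inclusion $\PP \subseteq \SO \cap \Sl$. Given $P \in \PP$, since $P$ is a linear isometry that maps $Q$ to itself, it must permute the six face centres of $Q$, which are precisely the vectors $\pm e_1, \pm e_2, \pm e_3$. Hence $P e_i = \pm e_{\sigma(i)}$ for some permutation $\sigma$ of $\{1,2,3\}$, so every column of $P$ is a signed standard basis vector, whence $P$ has integer entries. Combined with $\det P = 1$, this gives $P \in \Sl$.

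For the reverse inclusion $\SO \cap \Sl \subseteq \PP$, I would use the observation that any matrix $P \in \SO \cap \Sl$ must be a signed permutation matrix. Indeed, orthogonality forces each column of $P$ to have unit Euclidean norm, but a vector in $\mathbb{Z}^3$ of unit norm has exactly one nonzero entry, equal to $\pm 1$. Orthogonality of distinct columns then ensures these nonzero entries lie in distinct positions. Such a signed permutation matrix clearly permutes the vertex set $\{-1,1\}^3$ of $Q$, and since $Q$ is the convex hull of these vertices, $P$ maps $Q$ onto $Q$. Combined with $P \in \SO$, this yields $P \in \PP$.

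Finally, for the cardinality, I would count directly within $\SO \cap \Sl$: signed permutation matrices are specified by a choice of permutation ($3!$ possibilities) and a choice of sign for each column ($2^3$ possibilities), giving $48$ matrices in $\mathrm{O}(3) \cap \mathrm{GL}(3,\mathbb{Z})$. The determinant of such a matrix is the product of the chosen signs times the sign of the permutation, taking values in $\{-1,+1\}$, and the two values are attained equally often. Thus exactly $24$ of them have determinant $+1$, yielding $|\PP| = 24$.

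No single step is difficult here; the only mild subtlety is recognising that one should argue via the face centres (or equivalently the vertices) rather than directly with $Q$, and that the constraint \emph{integer entries $+$ orthogonal columns} collapses to \emph{signed permutation matrix}. Everything else is an elementary verification.
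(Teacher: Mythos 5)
Your proof is correct and follows essentially the same route as the paper's: both directions are argued via the action on the face centres (equivalently, the vertices), the key observation that an integer orthogonal matrix is a signed permutation matrix is identical, and the count $48/2=24$ matches the paper's enumeration up to bookkeeping. No issues.
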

\begin{proof}
Suppose that $P\in\PP$ and let $\{e_i\}_{i=1,2,3}$ denote the standard basis of $\ro^3$. By linearity, $P$ maps the face centres of $Q$ to face centres, i.e. for each $i=1,2,3$ there exists $j\in \{1,2,3\}$ such that
 $Pe_i=\pm e_{j}$.
Therefore, $P_{ki}=Pe_i\cdot e_k = \pm\delta_{kj}\in \lbrace {-1,0,1}\rbrace$ and thus $P\in \slz$.

Conversely, if $P\in \SO \cap \slz$, its columns form an orthonormal basis and since $P$ has integer entries the columns have to be in the set $\{\pm e_i\}_{i=1,2,3}$. Hence, $P$ is of the form
\[
P=\sum_{j=1}^3 \pm e_{k_j} \otimes e_j
\] 
and $Pe_i = \pm (e_i\cdot e_{j}) e_{k_j}$. Thus, $P$ maps face centres of $Q$ to face centres and, by linearity, the cube to itself. Further, there are precisely six choices (3$\times$2) for the first column $\pm e_{k_1}$, four choices (2$\times$2) for $\pm e_{k_2}$ and two choices for $\pm e_{k_3}$. Thus, taking into account the determinant constraint, there are $24$ elements in $\PP$.
 \end{proof}
 
\begin{Remark}
 Essentially the same proof can be used to show that $\mathrm{SO(\mathnormal{n})}\cap{{\mathrm{SL(\mathnormal{n},\mathbb{Z})}}}={\mathcal{P}}^{\mathnormal{N_n}}$, where 
$\mathcal{P}^{N_n}$ with $N_n=2^{n-1}n!$ denotes the symmetry group of a $n$-dimensional cube. As before the value of $N_n$ arises from having $2 \times n$ choices for the first column of $Q$, then $2 \times (n-1)$ choices for the second column of $Q$, ... and the last column of $Q$ is already fully determined by the determinant constraint.   
\end{Remark}

\begin{Def}{(Pseudometric and metric)}\\
Let $X$ be a vector space and $x$, $y$, $z \in X$. A map $\dm: X \times X \ra [0,\infty)$ is a \emph{pseudometric} if
\begin{enumerate}
 \item  $\dm(x,x) = 0$, \label{m1} 
 \item  $\dm(x,y)=\dm(y,x)$ \emph{(symmetry)},\label{m2}
 \item  $\dm(x,z) \leq \dm(x,y)+\dm(y,z)$ \emph{(triangle inequality)}.\label{m3}
\end{enumerate}
If in addition $\dm$ is positive definite, i.e. $\dm(x,y) = 0 \Lra x=y$, we call $\dm$ a \emph{metric}.
\end{Def}

\begin{Def}{(Matrix norms)}\\ \label{DefMatrixNorms}
For a given matrix $A \in \rr$ we define the following norms
\begin{itemize}
 \item[\icon]  \emph{Frobenius norm}: $$|A|=\sqrt{\Tr (A\T A)}=\sqrt{\sum_{i,j=1}^3 A_{ij}^2}.$$
 \item[\icon]  \emph{Spectral norm}: $$|A|_2=\sup_{|x|=1}|Ax|=\sqrt{\max_{i=1,2,3}{\lambda_i(A\T A)}}= \max_{i=1,2,3}{\nu_i(A)},$$where for $i=1,2,3$, $\nu_i(A)$ are the principal stretches/singular values of $A$ \\ \hspace*{1cm}and $\lambda_i(A\T A)$ are the eigenvalues of $A\T A$.
 \item[\icon]  \emph{Column max norm}: $$\|A\|_{2,\infty}=\max_{i=1,2,3} |Ae_i|=\max_{i=1,2,3} |a_i|,$$ where $\lbrace a_1,a_2,a_3\rbrace$ are the columns of $A$.
\end{itemize}
Unless otherwise specified, here and throughout the rest of the paper $|\cdot|$ always denotes the Euclidean norm if the argument is a vector in $\ro^3$ and the Frobenius norm if the argument is a matrix in $\rr$. Additionally, we henceforth denote by $\col{A}\colonequals\lbrace a_1,a_2,a_3\rbrace$ the columns of the matrix $A$ and then write $A=[a_1,a_2,a_3]$.
\end{Def}
The proofs of the following statements are elementary and can be found in standard textbooks on linear algebra (see e.g. \cite{MatrixBook}).
\begin{Lemma}{(Properties of matrix norms)}\label{LemFrob}\\
Both the Frobenius norm and the spectral norm are \emph{unitarily equivalent}, that is
\begin{equation}
 |RAS|=|A| \label{EqCy}
\end{equation}
for any $R,S \in \SO$. Further both norms are \emph{sub-multiplicative}, that is given $A,B,C \in \rr$ then $|ABC|\leq |A||B||C|$ and thus in particular if $|A||C|\neq0$ then
 \begin{equation}\label{EqSubM}
  |B|\geq \frac{|ABC|}{|A||C|}.
 \end{equation}
 Further the spectral norm is compatible with the Euclidean norm on $\ro^3$, that is 
 \begin{equation}\label{Eq2normComp}
  |Ax| \leq |A|_2 |x|.
 \end{equation}
 \end{Lemma}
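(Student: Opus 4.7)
The plan is to verify each claim from the definitions, exploiting that $R,S\in\SO$ are isometries in both the Euclidean-norm sense on $\ro^3$ and (through the cyclic property of the trace) on $\rr$. I would handle the Frobenius and spectral cases separately because they use different structural facts, and I would do compatibility first as it is the shortest and then feed it into the spectral sub-multiplicativity.

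For unitary invariance \eqref{EqCy}, in the Frobenius case I would expand $|RAS|^2=\Tr((RAS)^{\mathrm T}RAS)=\Tr(S^{\mathrm T}A^{\mathrm T}R^{\mathrm T}RAS)$, use $R^{\mathrm T}R=\id$ to kill the outer $R$, and then apply the cyclic invariance of $\Tr$ together with $SS^{\mathrm T}=\id$ to recover $\Tr(A^{\mathrm T}A)=|A|^2$. For the spectral norm I would argue geometrically: since $R$ preserves Euclidean length, $|RASx|=|ASx|$ for all $x$, and since $y\mapsto Sy$ is a bijection of the unit sphere to itself, taking suprema over $|x|=1$ yields $|RAS|_2=\sup_{|y|=1}|Ay|=|A|_2$.

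Compatibility \eqref{Eq2normComp} is immediate: for $x\neq 0$ write $|Ax|=|x|\,|A(x/|x|)|$ and bound by the supremum in the definition of $|A|_2$; the case $x=0$ is trivial. With this in hand, spectral sub-multiplicativity follows by iterating: $|ABCx|\le|A|_2|BCx|\le|A|_2|B|_2|Cx|\le|A|_2|B|_2|C|_2|x|$, and taking the supremum over unit $x$ gives $|ABC|_2\le|A|_2|B|_2|C|_2$. For the Frobenius version I would first prove $|AB|\le|A||B|$ by Cauchy--Schwarz on each entry,
\[
(AB)_{ij}^2=\Bigl(\sum_k A_{ik}B_{kj}\Bigr)^{\!2}\le\Bigl(\sum_k A_{ik}^2\Bigr)\Bigl(\sum_k B_{kj}^2\Bigr),
\]
and then sum over $i,j$, observing that the double sum factors as a product of row-sums and column-sums equal to $|A|^2|B|^2$. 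Applying this twice yields $|ABC|\le|A||B||C|$. The rearranged form \eqref{EqSubM} is then obtained by dividing, since $|A||C|\neq 0$ by hypothesis.

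None of the steps presents a real obstacle; the only subtlety is making sure that in the Frobenius argument the double sum genuinely splits as a product, which is what allows the Cauchy--Schwarz bound to close up cleanly. Everything else is standard linear algebra bookkeeping, and the entire lemma can be written up in a few lines per item.
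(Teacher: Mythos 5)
Your proof is correct and complete; all four claims (unitary invariance, compatibility, sub-multiplicativity for both norms, and the rearranged inequality \eqref{EqSubM}) are verified by exactly the standard arguments. The paper itself gives no proof of this lemma --- it defers to standard linear algebra textbooks --- and your write-up is precisely the textbook argument: trace cyclicity plus $R\T R=SS\T=\id$ for the Frobenius case, bijectivity of $S$ on the unit sphere for the spectral case, and entrywise Cauchy--Schwarz with the factoring double sum for Frobenius sub-multiplicativity. Nothing is missing.
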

The following sets will be of particular importance when proving the optimality of lattice transformations.
\begin{Def}{($\Slk$)} \label{Defslk} \\
 For $k\in \mathbb{N}$ define
 \begin{equation}\no
 \Slk \colonequals \lbrace A \in \Sl: |A_{mn}|\leq k\ \forall \, m,n\in \lbrace 1,2,3\rbrace \rbrace
\end{equation}
and
 \begin{equation}\no
 {{\mathrm{SL^{-k}(3,\mathbb{Z})}}} \colonequals \lbrace A \in \Sl: |\left(A^{-1}\right)_{mn}|\leq k\ \forall\,  m,n\in \lbrace 1,2,3\rbrace \rbrace.
\end{equation}
Clearly ${{\mathrm{SL^j(3,\mathbb{Z})}}}\subset {{\mathrm{SL^k(3,\mathbb{Z})}}} \ \forall\, 0 \leq j \leq k$ and $|{{\mathrm{SL^{-k}(3,\mathbb{Z})}}}|=|\Slk|$ for all $k\in \mathbb{Z}$. For example we have $|\SL{1}|=3\,480$, $|\SL{2}|=67\,704$, $|\SL{3}|=640\,824$, $|\SL{4}|=2\,597\,208$, $|\SL{5}|=10\,460\,024$ and $|\SL{6}|=28\,940\,280$.

 \end{Def}
 Below we recall some basic definitions and results from crystallography.
\begin{Def}{(Bravais lattice, \cite{Bha} Ch. 3)} \label{DefBrav} \\
Let $F=[f_1,f_2,f_3]\in \glp$, where $\col{F}=\lbrace f_1,f_2,f_3\rbrace $ are the columns of $F$. We define the \emph{Bravais lattice $\mathcal{L}(F)$ generated by $F$} as the lattice generated by $\col F$, i.e.
\begin{equation}\no
 \mathcal{L}(F)\colonequals \col{F. \zp}.
\end{equation}
Thus by definition a Bravais lattice is $\spann_{\mathbb{Z}}\lbrace f_1,f_2,f_3 \rbrace$ together with an orientation.
\end{Def}
\begin{Def}{(Primitive, base-, body- and face-centred unit cells)}\\
Let $\ml=\mathcal{L}(F)$ be generated by $F\in \glp$. We call the parallelepiped spanned by $\col F$ with one atom at each vertex a \emph{primitive unit cell} of the lattice. We call a primitive unit cell with additional atoms in the centre of the bases a \emph{base-centred} unit cell, we call a primitive unit cell with one additional atom in the body centre a \emph{body-centred} (bc) unit cell and we call a primitive unit cell with additional atoms in the centre of each of the faces a \emph{face-centred} (fc) unit cell.  
\end{Def}
\begin{Remark}
For any lattice generated by a base-, body- or face-centred unit cell there is a primitive unit cell that generates the same lattice. The following table gives the lattice vectors that generate the equivalent primitive unit cell for a given base-centred (C), body-centred (I) or face-centred (F) unit cell spanned by the vectors $\{a,b,c\} \in \R$.
\begin{table}[h]
\begin{center}
  \begin{tabular}{c*{4}{c}}
primitive (P) & base-centred (C) & body-centred (I) & face-centred (F)& \\
\hline
$\{a,b,c\}$ & $\{\frac{a-b}{2},\frac{a+b}{2},c\}$ & $\{\frac{-a+b+c}{2},\frac{a-b+c}{2},\frac{a+b-c}{2}\}$ & $\{\frac{b+c}{2},\frac{a+c}{2},\frac{a+b}{2}\}$\\
\end{tabular}
\end{center}
\caption{Lattice vectors of a primitive unit cell that generates the same lattice.}\label{TableUnit}
\end{table}
For our purposes all unit cells that generate the same lattice are equivalent and in order to keep the presentation as simple as possible we will always work with the primitive description of a lattice. However, we note that often in the literature the unit cell is chosen such that it has maximal symmetry. 

For example for a face-centred cubic lattice, the unit cell would be chosen as face-centred and spanned by $\col{\id}=\lbrace e_1,e_2,e_3\rbrace$ so that it has the maximal $\PP$ symmetry. However, if one considers primitive unit cells that span the same fcc lattice, the one with maximal symmetry is given by the last entry in Table~\ref{TableUnit} and thus spanned by $\col{F}$, where  
\begin{equation} \no
 F=\frac{1}{2}\begin{pmatrix} 0 & 1 & 1 \\ 1 & 0 & 1\\ 1 & 1 & 0 \end{pmatrix}
\end{equation}
and has only $6$-fold symmetry.
\end{Remark}
\begin{Lemma}{(Identical lattice bases, \cite{Bha} Result 3.1)}\label{LemEquiv} \\
Let $\ml(F)$ be generated by $F=[f_1,f_2,f_3]$ and $\ml(G)$ be generated by $G=[g_1,g_2,g_3]$. Then 
\begin{equation}\no
 \mathcal{L}(F)=\mathcal{L}(G)   \Lra  G=F\mu \Lra g_i = \sum_{j=1}^3\mu_{ji}f_j, 
\end{equation}
for some $\mu\in\slz$.
The same result holds for a face-, base- and body-centred unit cell.
\end{Lemma}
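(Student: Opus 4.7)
The plan is to separately establish the two equivalences. The second equivalence, $G = F\mu \Leftrightarrow g_i = \sum_j \mu_{ji} f_j$, is immediate: it is simply the definition of matrix multiplication applied column-wise, since $g_i = Ge_i = F\mu e_i = \sum_j (\mu e_i)_j f_j = \sum_j \mu_{ji} f_j$. So the real content lies in the first equivalence.

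For the direction $G = F\mu$ with $\mu \in \slz$ implies $\ml(F) = \ml(G)$, I would note that since $\mu$ has integer entries, every column of $G$ is an integer combination of columns of $F$, hence $\col{G.\zp} \subset \col{F.\zp}$, i.e. $\ml(G) \subset \ml(F)$. The crucial point is that $\mu \in \slz$ means $\det \mu = 1$, so by the adjugate formula $\mu^{-1}$ also has integer entries and belongs to $\slz$. Then $F = G\mu^{-1}$ gives the reverse inclusion $\ml(F) \subset \ml(G)$. Orientation is preserved since $\det G = \det F \cdot \det \mu = \det F > 0$.

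For the converse, suppose $\ml(F) = \ml(G)$. Then each column $g_i$ lies in $\ml(F)$, so there exist integers $\mu_{ji}$ with $g_i = \sum_j \mu_{ji} f_j$, i.e.\ $G = F\mu$ for some $\mu \in \mathbb{Z}^{3\times 3}$. By symmetry, each $f_i \in \ml(G)$, so $F = G\nu$ for some $\nu \in \mathbb{Z}^{3\times 3}$. Combining and using that $F \in \glp$ is invertible, one obtains $\mu\nu = \id$, whence $\det \mu \cdot \det \nu = 1$ in $\mathbb{Z}$, so $\det \mu = \pm 1$.

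The one point that needs care, and which is the reason the Bravais lattice carries an orientation in Definition~\ref{DefBrav}, is to exclude $\det \mu = -1$ and conclude $\mu \in \slz$ rather than merely $\mu \in \glz$. This is where I would invoke the orientation in the definition: both $F, G \in \glp$ have positive determinant, so $\det \mu = \det G / \det F > 0$, forcing $\det \mu = 1$. The extension to face-, base- and body-centred cells then follows from the fact (recorded in the preceding remark and Table~\ref{TableUnit}) that each such cell can be rewritten as a primitive cell generating the same lattice, reducing the statement to the primitive case already proved.
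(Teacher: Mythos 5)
The paper does not prove this lemma at all --- it is quoted as Result 3.1 of Bhattacharya's book --- so there is no in-paper argument to compare against. Your proof is the standard one and is complete and correct: the column-wise reading of $G=F\mu$, the two inclusions via integrality of $\mu$ and of $\mu^{-1}$ (adjugate formula), the converse via $\mu\nu=\id$ forcing $\det\mu=\pm1$, and the reduction of the centred cases to the primitive one via Table~\ref{TableUnit}. You also correctly isolate the one point where the paper's conventions genuinely matter, namely that restricting to $\glp$ (lattices carrying an orientation) is what upgrades $\det\mu=\pm1$ to $\mu\in\slz$ rather than merely $\mu\in\glz$; this is consistent with the paper's use of $\zp$ in Definition~\ref{DefBrav} and is exactly the form of the lemma used later in Lemma~\ref{LemmaLat}.
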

 \begin{Def}{(Lattice transformation)}\label{def:lattice_transformation}\\
Given two lattices $\mathcal{L}_0=\mathcal{L}(F)$ and $\mathcal{L}_1=\mathcal{L}(G)$ generated by $F,G \in \glp$ we call any matrix $H\in \glp$ such that $H.\mathcal{L}_0=\mathcal{L}_1$ a \emph{lattice transformation} from $\mathcal{L}_0$ to $\mathcal{L}_1$.
\end{Def}
\begin{Remark}\label{RemPoint}
In the terminology of Ericksen (see e.g. \cite{Zanz} p. 62ff.), if $\ml_0=\ml_1$ (i.e. $G=F\mu$), the matrices $H$ in Definition~\ref{def:lattice_transformation} are precisely all the orientation preserving elements in the \emph{global symmetry group} of $F$. Additionally, the matrices $H$ that are also rotations constitute the \emph{point group} of the lattice. We point out that, in this terminology, $\PP$ is the point group of any cubic lattice.
\end{Remark}

We end this section by defining the atom density of the lattice $\ml(F)$ and relating it to the determinant of $F$.
\begin{Def}{(Atom density)}\label{DefAtom}\\
 For a given lattice $\ml$ we define the atom density $\rho(\ml)$ by
 \begin{equation}\no
  \rho(\ml)\colonequals \lim_{N\raf} \frac{\#\lbrace{Q_N\cap \ml}\rbrace}{N^3},
 \end{equation}
 where $Q_N=[0,N]^3$ is the cube with side-length $N$ and $\#$ counts the number of elements in a discrete set. Thus $\rho(\ml)$ is the average number of atoms per unit volume.
\end{Def} 
\begin{Lemma}
 Let $\ml=\ml(F)$ be generated by $F\in \glp$ then
 \begin{equation}\no
  \rho(\ml)=\frac{1}{\det F}.
 \end{equation}
In particular a transformation $H=GF^{-1}$ from $\ml_0=\ml(F)$ to $\ml_1=\ml(G)$ does not change the atom density if and only if it is volume preserving, i.e. $\det H=1$. Further the atom density is well defined.
\end{Lemma}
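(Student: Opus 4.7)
The plan is to reduce the question to volume counting via a tiling of $\R$ by translates of the fundamental parallelepiped $P = F[0,1)^3$. Since $F \in \glp$ has linearly independent columns, the translates $\lbrace x + P : x \in \ml(F) \rbrace$ partition $\R$ disjointly, and a standard linear-algebra identity gives $\mathrm{vol}(P) = \det F$.

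First I would sandwich $\#(\ml(F) \cap Q_N)$ between volumes of slightly enlarged and slightly shrunken cubes. Setting $d = \diam(P)$, every translate $x+P$ with $x \in \ml(F) \cap Q_N$ is contained in $[-d,N+d]^3$, while every point of the inner cube $[d,N-d]^3$ lies in some such translate (provided $N>2d$). Summing volumes then yields
$$\frac{(N-2d)^3}{\det F} \leq \#(\ml(F) \cap Q_N) \leq \frac{(N+2d)^3}{\det F},$$
and dividing by $N^3$ before letting $N\to\infty$ makes the boundary contribution (of order $N^{-1}$) vanish, giving $\rho(\ml) = 1/\det F$.

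The well-definedness of $\rho$ is then an immediate consequence of Lemma~\ref{LemEquiv}: if $\ml(F) = \ml(G)$ there exists $\mu \in \slz$ with $G = F\mu$, whence $\det G = \det F \cdot \det \mu = \det F$, so the value $1/\det F$ depends only on the lattice. For the transformation claim, $H = GF^{-1}$ gives $\det H = \det G/\det F$, so $\rho(\ml_0) = \rho(\ml_1)$ if and only if $\det H = 1$.

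The only nontrivial step is the asymptotic count, and the main obstacle is the clean control of the boundary terms; however, once the fundamental-domain tiling is identified, this becomes a routine sandwiching argument and the remaining claims follow formally from Lemma~\ref{LemEquiv} and multiplicativity of the determinant.
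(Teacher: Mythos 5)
Your proof is correct and follows essentially the same route as the paper: tile $\mathbb{R}^3$ by translates of the fundamental cell, sandwich the count $\#(\ml\cap Q_N)$ between volumes of slightly inflated and deflated cubes, and let $N\to\infty$. The only (harmless) difference is in the reading of ``well defined'': the paper notes that the limit exists along every sequence $N\to\infty$, whereas you verify independence of the choice of generator $F$ via Lemma~\ref{LemEquiv}; both points are worth making and your determinant argument for the transformation claim matches the paper's.
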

\begin{proof}
 Denote by $\mathcal U \subset \ro^3$ the unit cell spanned by $\col F$, so that the volume of $\mathcal U$ is given by $|\mathcal U|=\det F$. Taking $n$ distinct points $x_i \in \ml$ we find that $|\bigcup_{i=1}^n (x_i+\mathcal{U})|=$ $n \det F$ since all elements are disjoint (up to zero measure). Let $l$ denote the side length of the smallest cube that contains $\mathcal U$. Further, as in Definition~\ref{DefAtom}, let $Q_N=[0,N]^3$ denote a cube of side-length $N$ and define $Q^\pm_N=[\mp2l,N\pm2l]^3$. Then
 \begin{equation}\no
  Q^-_N \subset \bigcup_{x \in \ml \cap Q_N}(x+\mathcal{U}) \subset Q^+_N
 \end{equation}
and thus by taking the volumes of the sets we obtain
\begin{equation}\no
 (N-4l)^3 \leq\#\lbrace{Q_N\cap \ml} \rbrace \det F\leq (N+4l)^3.
\end{equation}
Dividing by $N^3$ and taking the limit $N\raf$ yields the result. Since this limit exists for all sequences $N\raf$ the density is well defined.
\end{proof}

\section{Metrics and equivalence on matrices and lattices}\label{SecMet}

\begin{Def}{(Equivalent matrices and lattices )}\\
We define an equivalence relation $\thicksim$ between matrices $F,G$ in $\glp$ by
\begin{equation} \no
 F \thicksim G :\Lra \exists R \in \SO : G=RF,
\end{equation}
so that the equivalence class $[F]$ of $F$ is given by $[F]=\lbrace G \in \rr_+ : F \thicksim G \rbrace$. We denote the quotient space, i.e. the space of all equivalence classes, by 
\begin{equation} \no
 \overline{\glp} \colonequals \lbrace [F] : F \in \glp \rbrace.
\end{equation}
Furthermore, we define an equivalence relation $\thicksim$ between two lattices $\ml_0$ and $\ml_1$ by
\begin{equation}\no
 \ml_0 \thicksim \ml_1 :\Lra \exists R\in \SO: \ml_1 = R.\ml_0.
\end{equation}
\end{Def}
We are now in a position to define a metric on the quotient spaces. 

\begin{LemDef}{(Induced metric)} \label{LemmaInd}\\
Any pseudometric $\dm:\glp \times \glp \ra [0,\infty)$ with the property 
\begin{equation}\label{metriczero}
 \dm(F,G)=0 \Lra G=T^*F  \mbox{ for some } T^* \in \SO \tag{$\ast$}
\end{equation}
naturally induces a metric $\odm$ on $\overline{\glp}$ via
\begin{equation} \no
 \odm([F],[G])=\min_{R, S \in \SO}\dm (RF,SG).
\end{equation}
\end{LemDef}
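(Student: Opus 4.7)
The plan is to verify that $\odm$ is well-defined on the quotient and that it satisfies each of the metric axioms. For well-definedness, the group structure of $\SO$ is key: if $F' = R_0 F \in [F]$ and $G' = S_0 G \in [G]$ for some $R_0, S_0 \in \SO$, then the substitutions $R \mapsto R R_0$ and $S \mapsto S S_0$ are bijections of $\SO$ onto itself, so
\[
\min_{R,S \in \SO}\dm(RF', SG') \;=\; \min_{R,S \in \SO}\dm(R R_0 F, S S_0 G) \;=\; \min_{R,S \in \SO}\dm(RF, SG),
\]
which depends only on the classes $[F]$ and $[G]$. The infimum is in fact attained (justifying $\min$ rather than $\inf$) because $\SO \times \SO$ is compact and, for the pseudometrics considered in the paper, $\dm$ is continuous in its arguments.

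The elementary axioms then follow quickly. Non-negativity of $\odm$ is inherited from $\dm$; taking $R = S = \id$ yields $\odm([F],[F]) \leq \dm(F,F) = 0$; symmetry of $\odm$ reduces to that of $\dm$ after relabelling $(R,S) \leftrightarrow (S,R)$. Property $(\ast)$ is exactly what is needed to close positive definiteness: if $\odm([F],[G]) = 0$ with attained minimiser $(R^*, S^*)$, then $\dm(R^* F, S^* G) = 0$ forces $S^* G = T^* R^* F$ for some $T^* \in \SO$, so $G = \bigl((S^*)^{-1} T^* R^*\bigr) F \in \SO \cdot F$, i.e.\ $[F] = [G]$.

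The only substantive step is the triangle inequality. Let $(R_0, S_0)$ realise $\odm([F],[G])$ and $(T_0, U_0)$ realise $\odm([G],[H])$. The natural approach is to chain the two minimisers through an intermediate element of $[G]$, but $S_0 G$ and $T_0 G$ are in general distinct, and neither pairs optimally with both sides. The remedy is to synchronise: set $V \colonequals T_0 S_0^{-1} \in \SO$ so that $V \cdot S_0 G = T_0 G$, note that $V^{-1} U_0 H \in [H]$, and use the left-$\SO$-invariance $\dm(RA, RB) = \dm(A,B)$ to rewrite $\dm(T_0 G, U_0 H) = \dm(S_0 G, V^{-1} U_0 H)$. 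A single application of the triangle inequality of $\dm$ then gives
\[
\odm([F],[H]) \;\leq\; \dm(R_0 F, V^{-1} U_0 H) \;\leq\; \dm(R_0 F, S_0 G) + \dm(S_0 G, V^{-1} U_0 H) \;=\; \odm([F],[G]) + \odm([G],[H]).
\]

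The main obstacle is exactly this synchronisation step: it appears to require left-$\SO$-invariance of $\dm$ as an additional hypothesis beyond $(\ast)$, because the minimising representatives in $[G]$ on the two sides need not coincide, and without invariance one has only $\dm(S_0 G, U_0 H) \geq \odm([G],[H])$, the wrong direction. Fortunately, the invariance is automatic for every pseudometric considered later in the paper, in particular for $\dm(F,G) = |F - G|$ in the Frobenius, spectral, and column-max norms, each of which is left-unitarily invariant, so the construction $\odm$ does deliver a genuine metric on $\overline{\glp}$ in all cases of interest.
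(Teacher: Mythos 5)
Your argument is mechanically sound but rests on a misreading of the hypothesis, and as a result you prove a weaker statement than the one asserted. The property \eqref{metriczero} is a biconditional ($\Lra$), and it is the \emph{reverse} implication that resolves the synchronisation problem you identify: for any $A \in \glp$ and any $S' \in \SO$ we have $S'A = T^*A$ with $T^* = S'$, hence $\dm(A, S'A) = 0$. Combined with the triangle inequality and symmetry of $\dm$, this forces $\dm(RF, SG) = \dm(F,G)$ for all $R, S \in \SO$ --- the function being minimised is constant on $\SO \times \SO$ --- so no left-invariance hypothesis is needed, the minimum is trivially attained (which also makes your compactness/continuity discussion unnecessary), and the two minimising representatives of $[G]$ on either side of your chain can be exchanged at zero cost. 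This is exactly how the paper argues: it chains $\dm(RF, SH) \leq \dm(RF, G) + \dm(G, SH)$ through the fixed representative $G$ and then passes to arbitrary representatives $S'G$ and $R'G$ using $\dm(G, S'G) = \dm(G, R'G) = 0$. Your conclusion that the triangle inequality ``appears to require left-$\SO$-invariance of $\dm$ as an additional hypothesis beyond $(\ast)$'' is therefore incorrect; the lemma is true exactly as stated.

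A secondary slip: $\dm(F,G) = |F-G|$ (in any of the three norms) does \emph{not} satisfy \eqref{metriczero}, since $G = T^*F$ with $T^* \neq \id$ gives $|F - T^*F| \neq 0$ in general. The pseudometrics actually covered by the lemma are those of Example~1, $\dm_r(F,G) = |(F\T F)^{r/2} - (G\T G)^{r/2}|$, for which the constancy $\dm_r(RF,SG) = \dm_r(F,G)$ is visible directly because $(RF)\T (RF) = F\T F$.
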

\begin{proof}
The quantity $\odm$ is clearly well defined, so that in particular $\odm([F],[G])=\odm (F,G)$ and we may henceforth drop the $[\cdot]$ in the arguments of $\odm$. We first show that $\odm$ is a metric on $\overline{\glp}$. Positivity and symmetry are obvious from the definition. For definiteness note that if $\min_{R, S \in \SO}\dm (RF,SG)=0$ then by \eqref{metriczero} we have $S^*G=T^*R^*F$ for some $R^*, S^*, T^* \in \SO$ and thus $F\thicksim G$. It remains to show the triangle inequality. We have
 \begin{alignat}{3} \no
 \odm(F,H)&=\min_{R, S\in \SO}\dm (RF,SH) \\ \no
 &\leq  \min_{R,S} \left(\dm (RF,G)+\dm (G,SH)\right) \\ \no
 &\leq \min_{R,S'}\left(\dm (RF,S'G)+ \cancel{\dm(G,S'G)}\right)+\min_{R',S}\left(\dm (SH, R'G)+\cancel{\dm(G,R'G)}\right)\\ \no
 &=\odm(F,G)+\odm(G,H),\no
  \end{alignat}
where we have used the triangle inequality, symmetry of $\dm$ and \eqref{metriczero}.
\end{proof}

\begin{Example}\label{ExDi}
The family of maps $\dm_r:\glp \times \glp\ra [0,\infty), \, r \in \ro \backslash \{0\}$ given by
\begin{equation}\no
 \dm_r(F,G)=|(F\T F)^{r/2}-(G\T G)^{r/2}|
\end{equation}
are pseudometrics such that \eqref{metriczero} holds. In particular, each of them induces a metric $\odm_r$ on the quotient space $\overline{\glp}$.
\end{Example}
\begin{proof}
Positivity is obvious. The triangle inequality follows from the corresponding property of the Frobenius norm, i.e.
 \begin{alignat*}{3}
  \dm_r(F,H)&=|(F\T F)^{r/2}-(H\T H)^{r/2}|\\
  &\leq |(F\T F)^{r/2}-(G\T G)^{r/2}| +|(G\T G)^{r/2}-(H\T H)^{r/2}|=\dm_r(F,G)+\dm_r(G,H)
 \end{alignat*}
 and the property \eqref{metriczero} follows from
 \begin{alignat*}{3}
   \dm_r(F,G)=0 &\Lra (F\T F)^{r/2}=(G\T G)^{r/2} \Lra (FG^{-1})\!\T (FG^{-1})=\id \\
   &\Lra FG^{-1}\in \SO \Lra F=T^*G \mbox{ for some } T^* \in \SO.
 \end{alignat*}
\end{proof}
\begin{Remark}
The metric $\dm_2$ has previously been used in \cite[Chapter 3]{Bha}, where it was defined as the distance between the metric $C_F=F\T F=(f_i \cdot f_j)_{ij}$ of a set of lattice vectors $\col{F}=\lbrace {f_1,f_2,f_3} \rbrace$ and the metric $C_G=G\T G=(g_i \cdot g_j)_{ij}$ of a set of lattice vectors $\col{G}=\lbrace {g_1,g_2,g_3}\rbrace$. The use of the term metric in \cite{Bha} is not to be confused with the use of metric in the present paper.   
\end{Remark}

\section{Optimal lattice transformations}\label{SecOpt}
This section embodies the main part of the present paper. We first establish what we mean by an optimal transformation from one lattice to another and then, for a family of such criteria, show the existence of optimal transformations between \emph{any} two Bravais lattices. 
\begin{Lemma}\label{LemmaLat}
 Let $\mathcal{L}_0=\mathcal{L}(F)$ and $\mathcal{L}_1=\mathcal{L}(G)$ be generated by $F,G \in \glp$. Then all possible lattice transformations from $\mathcal{L}_0$ to $\mathcal{L}_1$ are given by $H_\mu=G\mu F^{-1}, \mu \in \slz$. In particular, the lattices coincide if and only if there exists $\mu \in \slz$ such that $G\mu F^{-1}=\id$ and they are equivalent, i.e. $\mathcal{L}_0 \thicksim \mathcal{L}_1$, if and only if there exists $\mu \in \slz$ such that $G\mu F^{-1}\in \SO$. 
\end{Lemma}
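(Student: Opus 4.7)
The plan is to reduce the entire statement to Lemma~\ref{LemEquiv} via the elementary observation that multiplication by $H\in\glp$ commutes with taking columns, so $H.\mathcal{L}(F)=\mathcal{L}(HF)$. Consequently, $H\in\glp$ is a lattice transformation from $\mathcal{L}_0$ to $\mathcal{L}_1$ precisely when $\mathcal{L}(HF)=\mathcal{L}(G)$.

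First I would apply Lemma~\ref{LemEquiv} to the pair $(HF,G)$: the equality $\mathcal{L}(HF)=\mathcal{L}(G)$ is equivalent to $G=HF\nu$ for some $\nu\in\slz$. Solving for $H$ gives $H=G\nu^{-1}F^{-1}$, and setting $\mu\colonequals\nu^{-1}$ (which still lies in $\slz$, since $\slz$ is a group) yields the claimed representation $H=H_\mu=G\mu F^{-1}$. For the converse inclusion I would verify that every $H_\mu$ so constructed really is a lattice transformation: its determinant equals $(\det G)(\det F)^{-1}>0$ because $\det\mu=1$ by definition of $\slz$, so $H_\mu\in\glp$; and since $H_\mu F=G\mu$, a second application of Lemma~\ref{LemEquiv} delivers $\mathcal{L}(H_\mu F)=\mathcal{L}(G)$, i.e. $H_\mu.\mathcal{L}_0=\mathcal{L}_1$.

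The two ``in particular'' clauses then follow by instantiating the choice of lattice transformation. The lattices coincide, $\mathcal{L}_0=\mathcal{L}_1$, exactly when $\id$ is itself a lattice transformation from $\mathcal{L}_0$ to $\mathcal{L}_1$, and by the representation just derived this happens if and only if $\id=G\mu F^{-1}$ for some $\mu\in\slz$. Likewise, $\mathcal{L}_0\thicksim\mathcal{L}_1$ means that some $R\in\SO$ is a lattice transformation, which by the first part is equivalent to $G\mu F^{-1}\in\SO$ for some $\mu\in\slz$.

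I do not foresee a genuinely difficult step: the statement is essentially a bookkeeping consequence of Lemma~\ref{LemEquiv}. The only minor point needing attention is that each $H_\mu$ actually lies in $\glp$ rather than merely in $\mathrm{GL}(3,\mathbb{R})$, and this is automatic from $\det\mu=+1$; had $\glz$ appeared in place of $\slz$, an additional sign argument would have been necessary, but with $\slz$ the orientation is preserved for free.
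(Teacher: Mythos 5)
Your proof is correct and follows essentially the same route as the paper: both reduce the statement to Lemma~\ref{LemEquiv} together with the group structure of $\slz$. The only cosmetic difference is that the paper verifies the forward inclusion (that each $H_\mu$ maps $\mathcal{L}_0$ onto $\mathcal{L}_1$) by the direct computation $H_\mu.\col{F.\zp}=\col{G\mu.\zp}=\col{G.\zp}$, whereas you obtain both directions from the single equivalence $\mathcal{L}(HF)=\mathcal{L}(G)\Leftrightarrow G=HF\nu$ with $\nu\in\slz$; your treatment of the two ``in particular'' clauses matches the intended reading.
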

\begin{proof}
Let $H_\mu=G\mu F^{-1}, \mu \in \slz$. Then 
\begin{alignat*}{3}
  H_\mu . \mathcal{L}_0&=H_\mu .\mathcal{L}(F)=H_\mu .\col{F.\zp}=\col{H_\mu F.\zp}\\
  &=\col{G\mu F^{-1}F.\zp}=\col{G\mu.\zp}=\col{G.\zp}=\mathcal{L}_1,
 \end{alignat*}
where we used that $\mu$ is invertible over $\mathbb{Z}$, so that $\mu.\zp=\zp$. Thus $H_\mu$ is a lattice transformation from 
$\mathcal{L}_0$ to $\mathcal{L}_1$.
For the reverse direction we know by Lemma~\ref{LemEquiv} that $\mathcal{L}{(F)}=\mathcal{L}{(F')}$ if and only if $F'=F\mu', \mu' \in \slz$ and $\mathcal{L}{(G)}=\mathcal{L}{(G')}$ if and only if $G'=G\mu'', \mu'' \in \slz$ so that all possible generators for $\mathcal{L}_0$ are given by $F\mu', \mu' \in \slz$ and all possible generators for $\mathcal{L}_1$ are given by $G\mu'', \mu'' \in \slz$. Thus any lattice transformation from $\mathcal{L}_0$ to $\mathcal{L}_1$ is given by
\begin{equation}\no
 H_{\mu'}^{\mu''}=G\mu''{\mu'}^{-1}F^{-1},\ \ \mu',\mu''\in \slz.
\end{equation}
But by the group property we may set $H_\mu\colonequals H_{\mu'}^{\mu''}$ with $\mu=\mu''{\mu'}^{-1}\in \slz$.
\end{proof}

\begin{Def}{($\dm$-optimal lattice transformations)}\label{DefOpti}\\
 Given two lattices $\mathcal{L}_0=\mathcal{L}(F)$ and $\mathcal{L}_1=\mathcal{L}(G)$ with $F,G \in \glp$ and a metric $\dm$, we call a lattice transformation $H_\mu= G\mu F^{-1}, \ \mu \in \slz$ (cf. Lemma~\ref{LemmaLat}) \emph{$\dm$-optimal} if it minimises the distance to $\id$ with respect to $\dm$, 
 i.e. $H_{\min}=G\mu_{\min} F^{-1}$, where 
 \begin{alignat}{3}\label{muMin}
   \mu_{\min}=\argmin_{\mu\in \slz} \dm({H_\mu},\id). 
 \end{alignat}
If $\dm$ is a pseudometric satisfying \eqref{metriczero}, we call a transformation \emph{$\dm$-optimal} if it is optimal with respect to the induced metric $\odm$ on the quotient space $\overline{\glp}$, i.e. the $\dm$-optimal transformation is the one that is $\dm$-closest to being a pure rotation and maps $\ml_0$ to a lattice in the equivalence class $[\ml_1]:=\lbrace \ml: \ml \thicksim \ml_1\rbrace$ of $\ml_1$. 
 \end{Def}
%
\begin{Example}
For the pseudometrics $\dm_r$ from Example~\ref{ExDi} the explicit expressions for the distance in \eqref{muMin} read
\begin{equation}
 \dm_r(H,\id)=|(H\T H)^{r/2}-\id|=\left( \sum_{i=1}^3 (\nu_i^r-1)^2\right)^{1/2},\label{Hdr}
\end{equation}
where $\nu_i, \, i=1,2,3$ are the principal stretches/singular values of $H$. The quantities $(H\T H)^{r/2}-\id$ are clearly measures of principal strain and are known as the Doyle-Ericksen strain tensors (see \cite[p. 65]{DoyleErick}). For $r=1$, it is simple to verify that
\begin{equation}\no
 \dm_1(H,\id)=\dist(H,\SO)=\min_{R \in \SO} |H-R|.
 \end{equation}
\end{Example}
 \begin{Remark} 
  For a general metric $\dm$ as in Definition ~\ref{DefOpti}, the optimal transformation $H_{\min}$ between $\mathcal{L}(F)$ and $\mathcal{L}(G)$ is unchanged under actions of the point groups of both lattices, i.e.  
  \begin{equation}\nonumber
   H_{\min}=G\argmin_{\mu\in \slz} \dm({G\mu F^{-1}},\id) F^{-1} = (PG)\argmin_{\mu\in \slz} \dm({(PG)\mu (QF)^{-1}},\id) (QF)^{-1} 
  \end{equation}
for any $P$ in the point group of $\ml(G)$ and $Q$ in the point group of $\ml(F)$. Throughout the rest of the paper, we only use pseudometrics satisfying \eqref{metriczero}. In this case, the notion of optimality is not only invariant under actions of the respective point groups but also under rigid body rotations of the product lattice. Thus Definition~\ref{DefOpti} returns an equivalence class $[H_{\min}]\in \overline{\glp}$ of $\dm$-optimal transformations. By the polar decomposition theorem we may, and henceforth always will, pick the symmetric representative
 \begin{equation}\no
  \bar{H}_{\min}:=\sqrt{H_{\min}\T H_{\min}} \in \rr_{\mbox{\scriptsize{sym}}},
 \end{equation}
 i.e. the pure \emph{stretch component} of the transformation $H_{\min}$. Note that in general the set of minimising equivalence classes $\lbrace [H_{\min}] :  H_{\min} \mbox{ is $\dm$-optimal}\rbrace$ may contain more than one element. In such a case, different regions of the parent lattice may transform according to any of these optimal strains, giving rise to e.g. twinning.
 
The pseudometrics $\dm_r$ are additionally invariant under rotations from the right, i.e. $\dm_r(H,\id)=\dm_r(H S,\id)$ for all $S\in \SO$. For any such pseudometric, a rigid body rotation $R$ of the parent lattice $\ml_0$ results in an optimal transformation with stretch component $R \bar{H}_{\min}R\T $, where $\bar H_{\min}$ is the stretch component of the optimal transformation from $\ml_0$ to $[\ml_1]$. Note that $R \bar{H}_{\min}R\T $ is simply $\bar{H}_{\min}$ expressed in a different basis and in particular, even though the coordinate representation is different, the underlying transformation mechanism is unchanged. 
 \end{Remark}
Our main theorem says that a $\dm_r$-optimal lattice transformation always exists and the following lemma will be the crucial tool in the proof.

\begin{KeyLemma}\label{LemmaKey}
Let $H$ be a lattice transformation from $\ml_0=\ml(F)$ to $\ml_1=\ml(G)$ and consider a lattice vector $f\in \ml_0$ that is transformed by $H$ to $g=Hf$. Then 
 \begin{equation} \no
  \nu_{\max}(H) \geq |g|/|f| \geq  \nu_{\min}(H),
 \end{equation}
where $\nu_{\min}(H)$, $\nu_{\max}(H)$ denote, respectively, the smallest and largest principal stretches/singular values of $H$. In particular for any $s>0$,
\begin{alignat}{3}
  \dop_s(H,\id)&=\left(\sum_{i=1}^3(\nu_i^s-1)^2\right)^{1/2} &\geq \max_i \frac{|Hf_i|^s}{|f_i|^s}-1, \label{d2bound} \\
    \dop_{-s}(H,\id)&=\left(\sum_{i=1}^3(\nu_i^{-s}-1)^2\right)^{1/2} &\geq \max_i\frac{|H^{-1}g_i|^s}{|g_i|^s}-1, \label{d-2bound}
\end{alignat}
where $\col{F}=\lbrace{f_1,f_2,f_3}\rbrace$ and $\col{G}=\lbrace g_1,g_2,g_3\rbrace$.
 \end{KeyLemma}
\begin{proof}
 Consider the singular value decomposition of $H=UDV$, where $D=\diag(\nu_1,\nu_2,\nu_3)$ and $U,V \in \SO$. Then
 \begin{equation} \no
  |g|=|UDVf|=|DVf|\leq \max_i\,{\nu_i(H)}|Vf|= \nu_{\max}(H)|f| 
 \end{equation}
and analogously for the lower bound.
\end{proof}

\begin{Theorem}\label{TheCpct}
 Given two lattices $\mathcal{L}_0=\mathcal{L}(F)$ and $\mathcal{L}_1=\mathcal{L}(G)$ generated by $F,G \in \glp$ respectively, there exists a $d_r$-optimal lattice transformation $H_{\mu_{\min}}=G{\mu_{\min}} F^{-1}$, for any $r \in \ro \backslash \{0\}$. For $s>0$ all optimal changes of basis are contained in the finite compact sets 
\begin{alignat}{3}
 \dm_s: \quad &\left \lbrace \mu\in \slz : \tinf{\mu}^s\leq \frac{\tinf{F}^s}{\nu_{\min}^s(G)}(m_{0,s}+1)\right\rbrace, \label{Eqmu2} \\
  \dm_{-s}: \quad &\left \lbrace \mu\in \slz : \tinf{\mu^{-1}}^s\leq \frac{\tinf{G}^s}{\nu_{\min}^s(F)}(m_{0,-s}+1)\right\rbrace, \label{Eqmu-2} 
\end{alignat}
where $\nu_{\min}(A)$ denotes the smallest principal stretch/singular value of $A$ and $$m_{0,r}\colonequals \dm_{r}({H_\id},\id)=\dm_{r}(GF^{-1},\id).$$
\end{Theorem}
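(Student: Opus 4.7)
The strategy is to reduce the minimisation in Definition~\ref{DefOpti} over the infinite discrete set $\slz$ to a minimisation over a finite set, after which existence of a minimiser follows trivially. The idea is that any $\mu \in \slz$ with sufficiently large entries will force $d_r(H_\mu, \id)$ to exceed $m_{0,r} = d_r(H_\id, \id)$, and hence such $\mu$ can never be optimal.

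I would separate the two cases $r = s > 0$ and $r = -s < 0$, treating the positive case first and obtaining the negative one by an analogous argument with the roles of $F, G$ and of $\mu, \mu^{-1}$ swapped. For the positive case, the key estimate combines the Key Lemma with the identity
\begin{equation*}
H_\mu f_j \;=\; G\mu F^{-1} f_j \;=\; G\mu e_j \;=\; G\mu_j,
\end{equation*}
where $\mu_j$ denotes the $j$-th column of $\mu$. Applying \eqref{Eq2normComp} to the inverse of $G$ gives $|G\mu_j| \geq \nu_{\min}(G)\,|\mu_j|$, and since $|f_j| \leq \tinf{F}$, the lower bound \eqref{d2bound} in the Key Lemma yields
\begin{equation*}
\dm_s(H_\mu, \id) \;\geq\; \max_j \frac{|G\mu_j|^s}{|f_j|^s} - 1 \;\geq\; \frac{\nu_{\min}^s(G)}{\tinf{F}^s}\,\tinf{\mu}^s - 1.
\end{equation*}
If $\mu$ is a minimiser then $\dm_s(H_\mu, \id) \leq m_{0,s}$, and rearranging yields the bound \eqref{Eqmu2}. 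For $r=-s<0$ one uses $H_\mu^{-1} g_j = F\mu^{-1} e_j$, applies \eqref{d-2bound}, and obtains \eqref{Eqmu-2} in the same way.

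The sets appearing in \eqref{Eqmu2}, \eqref{Eqmu-2} are finite: each entry of $\mu$ (respectively $\mu^{-1}$) is an integer and is bounded in absolute value by $\tinf{\mu}$ (respectively $\tinf{\mu^{-1}}$), so each set contains only finitely many integer matrices. Since $\mu \in \slz$ implies $\mu^{-1} \in \slz$, both cases produce compact constraint sets. The function $\mu \mapsto \dm_r(G\mu F^{-1},\id)$ then attains its minimum on the relevant finite set, proving existence of a $\dm_r$-optimal transformation.

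The only step that requires care is the passage from the Key Lemma bound, which controls the images of specific lattice vectors, to a bound on $\tinf{\mu}$; this is essentially the observation $H_\mu f_j = G\mu_j$, which decouples the dependence on $F$ (absorbed into $\tinf{F}$) from the $\mu$-dependence. Once this is in place, the rest is bookkeeping. No compactness argument in the continuous sense is needed, since the underlying set $\slz$ is discrete and the problem becomes combinatorial after the truncation.
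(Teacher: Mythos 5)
Your proposal is correct and follows essentially the same route as the paper: the identity $H_\mu f_j = G\mu e_j$, the Key Lemma bounds \eqref{d2bound}--\eqref{d-2bound}, the spectral-norm estimate $|G\mu_j|\geq\nu_{\min}(G)|\mu_j|$ (equivalently $|G^{-1}|_2^{-1}=\nu_{\min}(G)$), and the observation that the resulting constraint confines any minimiser to a finite subset of $\slz$ on which the minimum is trivially attained. No gaps.
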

\begin{proof}
As the minimisation is over the discrete set $\slz$ it suffices to show that the minimum is attained in a (compact) finite subset of $\slz$ given by \eqref{Eqmu2} and \eqref{Eqmu-2}. Let $H_\mu=G \mu F^{-1}, \, \mu \in \slz$ be a lattice transformation from $\ml_0=\ml(F)$ to $\ml_1=\ml(G)$. Then, letting $\{e_i\}_{i=1,2,3}$ denote the standard basis vectors of $\rr$,
\begin{equation} \no
 H_\mu f_i=G \mu F^{-1}Fe_i=G \mu e_i \aand H_{\mu}^{-1}g_i=F\mu^{-1} G^{-1}Ge_i=F\mu^{-1} e_i
\end{equation}
and thus, by using the Key Lemma~\ref{LemmaKey} and \eqref{Eq2normComp}, we obtain
\begin{alignat*}{4}
  \dop_s(H_\mu,\id) &\geq \max_i \frac{|G \mu e_i|^s}{|f_i|^s}-1 &\geq \frac{\|\mu\|^s_{2,\infty}}{|G^{-1}|_2^s\|F\|^s_{2,\infty}}-1 &=\frac{\nu_{\min}^s(G)\|\mu\|^s_{2,\infty}}{\|F\|^s_{2,\infty}}-1,\\
    \dop_{-s}(H_\mu,\id) &\geq \max_i\frac{|F\mu^{-1} e_i|^s}{|g_i|^s}-1 &\geq \frac{\|\mu^{-1}\|^s_{2,\infty}}{|F^{-1}|_2^s\|G\|^s_{2,\infty}}-1 &=\frac{\nu_{\min}^s(F)\|\mu^{-1}\|^s_{2,\infty}}{\|G\|^s_{2,\infty}}-1, 
\end{alignat*}
where in the equality we have used that $\lbrace \nu_i (A^{-1})\rbrace_{i=1,2,3}=\lbrace (\nu_i(A))^{-1}\rbrace_{i=1,2,3}$. Thus $\dm_r({H_\mu},\id)> \dm_r({H_\id},\id)$ for all $\mu$ in the complement of the respective sets given by \eqref{Eqmu2} and \eqref{Eqmu-2} and therefore $H_\mu$ cannot be $\dm_r$-optimal. 
\end{proof}

\begin{Remark} \label{RemarkDistances}
The distance $\dm_1(H,\id)$ seems to be the most natural candidate to determine the transformation requiring least atomic movement. The quantities $\nu_i-1$ measure precisely the displacement along the principal axes and thus their use is in line with the criterion of smallest principal strains as in e.g. \cite{Bhadeshia}, \cite{BowlesWayman} and \cite{Lomer}. The distance $\dm_2(H,\id)$ seems natural from a mathematical perspective as the tensor $H\T H$ corresponds to the flat metric induced by the deformation $H$ and it has also been used to define the Ericksen-Pitteri neighbourhood of a lattice (see e.g. $(2.17)$ in \cite{BallTests}). Finally, the distance $\dm_{-2}(H,\id)$ has recently been used in \cite{Sherry} in order to avoid singular behaviour when considering sublattices.
\end{Remark}
Below we illustrate the differences of $\dm_1, \dm_2$ and $\dm_{-2}$ through a simple but instructive 1D example.
\begin{Example}{(A comparison of different optimality conditions)}\\
 We consider two atoms $A,B$ that are originally at unit distance, i.e. $|A-B|=1$ and then move the atom $B$ to its deformed position $B'$. Thus $H$ is simply a scalar quantity given by $H=|A-B'|/|A-B|=|A-B'|$.
      \begin{table}[h]
\begin{center}
  \begin{tabular}{{r}{l}{l}{c}}
 & $B'-B=0.5,$ &$B'-B=-0.5,$ & deformation $y$ such that  \\
 r &  $H=|A-B'|=1.5$ &  $H=|A-B'|=0.5$  & $\dm_r(y,\id)=\dm_r(x,\id)$ \\
\hline
1 & $\dm_1(H,\id)=0.5$ & $\dm_1(H,\id)=0.5$ & $y=2-x$  \\
2 & $\dm_2(H,\id)=1.25$& $\dm_2(H,\id)=0.75$ &$y=\sqrt{2-x^2}$ \\
-2 & $\dm_{-2}(H,\id)=0.\bar5$  & $\dm_{-2}(H,\id)=3$ & $y=\frac{1}{\sqrt{2-x^{-2}}}$ 
\end{tabular}
\end{center}
\caption{Comparison of different distances} \label{TableCompDist}
\end{table}
 It can be seen from Table~\ref{TableCompDist} that $\dm_1$ depends only on the distance between $B$ and $B'$; an expansion by $100\%$ has the same $\dm_1$ distance to $\id$ as a contraction to $0$, i.e. moving $A$ onto $B$. The metric $\dm_2$ penalises expansions more than contractions; e.g. an expansion by $\approx~141\%$ has the same $\dm_2$ distance to $1$ as a contraction to $0$. The metric $\dm_{-2}$ penalises contractions significantly more than expansions; e.g. an expansion by $\infty$ has the same $\dm_{-2}$ distance to $\id$ as a contraction to $\approx 70\%$, i.e. reducing the distance between $A$ and $B$ by $\approx 30\%$. 
\end{Example}

\subsubsection*{A remark on the computation of the optimal transformation}
Theorem~\ref{TheCpct} provides the necessary compactness result to reduce the original minimisation problem over the infinite set $\slz$ to a finite subset given by \eqref{Eqmu2} and \eqref{Eqmu-2} respectively. To this end, it is worth noting that the smaller the deformation distance $m_{0,r}=\dm_r({GF^{-1}},\id)$ of the initial lattice basis the smaller the radius of the ball in $\slz$ that contains the optimal $\mu$. However, in specific cases, where better estimates are available, it might be advantageous to start with an initial lattice basis that is not optimal.

Nevertheless, in order to explicitly determine the optimal transformations one still needs to compare the distances $\dm_r(H_\mu,\id)$ for all elements contained in the finite sets given by \eqref{Eqmu2} and \eqref{Eqmu-2} respectively. This can easily be carried out with any modern computer algebra program and possible implementations can be found in the \hyperref[append]{Appendix}.

In order to ensure that the solution of this finite minimisation problem is correct one needs to verify that the difference $\Delta$ between the minimal and the second to minimal deformation distance is large compared to possible rounding errors (if any). The computations in Sections~\ref{subsecBain} and~\ref{subsecBainStab} for the Bain strain from fcc-to-bcc/bct are exact and thus without rounding errors. In Section~\ref{subsecTere} regarding the optimal transformation in Terephthalic Acid we find that $\Delta>0.015$ which is large compared to machine precision. 

\subsection{The Bain strain in fcc-to-bcc}\label{subsecBain}
Having established the general theory of optimal lattice transformations we apply these results to prove the optimality of the Bain strain with respect to the three different lattice metrics $\dm_r$, $r=-2,1,2$, from the previous example. In these cases we rigorously prove the optimality of the Bain strain first proposed in \cite{Bain}. 

\begin{Theorem}{(Bain Optimality)} \label{ThBain}\\
In a transformation from an fcc to a bcc lattice with no change in atom density, there are three distinct equivalence classes of $\dm_r$-optimal lattice transformations for $r=1,2,-2$. The stretch components are given by
\begin{equation}\no
 \bar H_{\min} \in \lbrace \diag(2^{-1/3},2^{1/6},2^{1/6}), \, \diag(2^{1/6},2^{-1/3},2^{1/6}), \, \diag(2^{1/6},2^{1/6},2^{-1/3})\rbrace,
\end{equation}
i.e. the three Bain strains are the $\dm_r$-optimal lattice transformations in a volume preserving fcc-to-bcc transformation for $r=1,2,-2$.
The respective minimal metric distances are  
\begin{alignat*}{2}
 m_{\min,1}&=\dm_1({ H_{\min}},\id)=\sqrt{\left(2^{-1/3}-1\right)^2+2\left(2^{1/6}-1\right)^2}\approx  0.269,\\
 m_{\min,2}&=\dm_2({ H_{\min}},\id)=\sqrt{\left(2^{-2/3}-1\right)^2+2\left(2^{1/3}-1\right)^2}\approx  0.522,\\
  m_{\min,-2}&=\dm_{-2}({ H_{\min}},\id)=\sqrt{\left(2^{2/3}-1\right)^2+2\left(2^{-1/3}-1\right)^2}\approx  0.656.
\end{alignat*}
\end{Theorem}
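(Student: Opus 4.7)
The strategy is to apply Theorem~\ref{TheCpct} to reduce the infinite minimization over $\slz$ to a finite search, then exploit the cubic symmetry of both lattices and finish by direct enumeration.

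First I would fix primitive generators: $F = \tfrac12\begin{pmatrix} 0 & 1 & 1 \\ 1 & 0 & 1 \\ 1 & 1 & 0 \end{pmatrix}$ for the fcc lattice (cubic parameter $a=1$), and the analogous $G$ from Table~\ref{TableUnit} rescaled so that $\det F = \det G$; the scale factor $b = 2^{-1/3}$ is forced by the no-change-in-atom-density hypothesis together with Definition~\ref{DefAtom}. I would then verify by direct integer algebra — essentially a restatement of Bain's observation that a bct sublattice of fcc deforms into bcc — that each of the three candidate Bain stretches $\bar B \in \{\diag(2^{-1/3}, 2^{1/6}, 2^{1/6}),\, \diag(2^{1/6}, 2^{-1/3}, 2^{1/6}),\, \diag(2^{1/6}, 2^{1/6}, 2^{-1/3})\}$ arises as the stretch component of some lattice transformation $H_{\mu^*} = G\mu^* F^{-1}$. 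The candidate values $m_r^{*} = \dm_r(\bar B,\id)$ computed from \eqref{Hdr} by substituting $\nu_i \in \{2^{-1/3}, 2^{1/6}, 2^{1/6}\}$ match the three numerical constants stated in the theorem.

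Next I would compactify. Rather than invoking Theorem~\ref{TheCpct} with $\mu_0 = \id$, I would use $\mu_0 = \mu^*$ so that $m_{0,r} = m_r^{*}$ is already the conjectured minimum; this yields the tightest possible radius in \eqref{Eqmu2}/\eqref{Eqmu-2} and hence the smallest finite set $\SL{k}$ (or $\SL{-k}$) in which any $\dm_r$-optimal $\mu$ can lie.

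The remaining step is to search this finite set. Since both fcc and bcc have point group $\PP$ and $\PP\subset\slz$ by Lemma~\ref{lemmap24}, the function $\mu\mapsto\dm_r(G\mu F^{-1},\id)$ is invariant under a natural $\PP\times\PP$ action by the Remark following Definition~\ref{DefOpti}, so it suffices to enumerate orbit representatives — a reduction by up to $|\PP|^2 = 576$. A computer enumeration (per the algorithm in the Appendix) then confirms that the minimum is attained exactly on the $\PP\times\PP$-orbit of the three Bain stretches, which constitute three distinct equivalence classes in $\overline{\glp}$ — distinct because distinct symmetric positive-definite matrices cannot differ by a left rotation — and that no other equivalence class achieves this minimum. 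The main obstacle is keeping the finite enumeration tractable: $|\SL{k}|$ grows rapidly (see the table after Definition~\ref{Defslk}), so the compactification step choosing $\mu_0 = \mu^*$ to make $m_{0,r}$ small is essential for feasibility. A secondary concern, the reliability of the finite numerical search, is sidestepped because all the relevant distances are algebraic in $2^{1/6}$ and can be compared symbolically without rounding error, as emphasized in the computational remark preceding Section~\ref{subsecBain}.
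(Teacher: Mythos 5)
Your proposal is correct and shares the paper's overall skeleton (exhibit the Bain candidates and their distances, confine the minimisation to a finite subset of $\slz$, finish by an exact computer enumeration), but the compactification step is genuinely different. The paper does not invoke Theorem~\ref{TheCpct} here at all: it applies the Key Lemma~\ref{LemmaKey} directly, writing $H_\mu f_i=\al_i b_1+\be_i b_2+\ga_i b_3$ and exploiting the special geometry of the bcc basis ($|b_i|^2=3\cdot 2^{-8/3}$, $\ssp{b_i,b_j}=-2^{-8/3}$ for $i\neq j$) to obtain the exact expression $|H_\mu f_i|^2=2^{-8/3}\rho(\al,\be,\ga)$ with $\rho(\al,\be,\ga)=\al^2+\be^2+\ga^2+(\al-\be)^2+(\be-\ga)^2+(\al-\ga)^2$; since $\rho\geq 8$ once any entry exceeds $1$ in modulus, every $\mu\in\slz\backslash\SL{1}$ gives $\dm_1(H_\mu,\id)\geq 2^{2/3}-1>0.5$ and $\dm_2(H_\mu,\id)\geq 2^{4/3}-1>1.5$ (with an analogous $\s\geq 6$ bound forcing the $\dm_{-2}$-optimal $\mu^{-1}$ into $\SL{1}$), so the enumeration runs only over the $3\,480$ elements of $\SL{1}$. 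Your route --- Theorem~\ref{TheCpct} seeded with $\mu_0=\mu^{*}$ --- is sound: with $\tinf{F}=2^{-1/2}$ and $\nu_{\min}(B)=2^{-4/3}$ it gives $\tinf{\mu}\leq 2^{5/6}(m_{\min,1}+1)\approx 2.26$ for $r=1$ (and similar radii for $r=2,-2$), hence a finite set inside $\SL{2}$; this is larger ($|\SL{2}|=67\,704$) but tractable, and it is precisely the plug-and-play method the paper itself uses for Terephthalic Acid. What the bespoke estimate buys is the much smaller search set and the large margin ($\gg m_{\min,r}$) that is reused later in Proposition~\ref{PropEx} and Theorem~\ref{ThStab}. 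One caution on your $\PP\times\PP$ reduction: the induced action $\mu\mapsto(B^{-1}PB)\,\mu\,(F^{-1}QF)$ preserves the distance but not obviously the ball $\tinf{\mu}\leq\mathrm{const}$, so enumerating only orbit representatives chosen inside the ball risks missing orbits whose representative falls outside it; since the full finite set is small, it is safer to enumerate it entirely, as the paper does.
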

  \begin{figure}[h]
  \centering
  {\includegraphics[height=4cm]{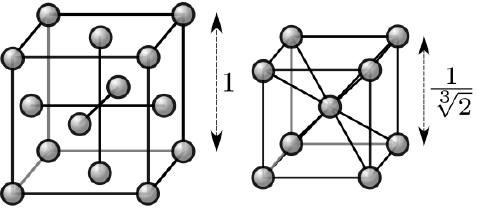}} 
  \caption{Face-centred and body-centred cubic unit cells with equal atom density.} \label{FigUnitCells}
\end{figure}
\begin{proof}
 Let $\ml_0$ denote the fcc lattice, where the fcc unit cell has unit volume and let $\ml_1$ denote the bcc lattice with the same atom density (see Fig.~\ref{FigUnitCells}). Then $\ml_0=\ml(F)$ and $\ml_1=\ml(B)$, where
 \begin{equation} \label{FB}
  F=\frac{1}{2}\begin{pmatrix} 0 & 1 & 1 \\ 1 & 0 & 1\\ 1 & 1 & 0 \end{pmatrix}=[ f_1,f_2,f_3 ], \ \ B=2^{-1/3}\frac{1}{2}\begin{pmatrix} -1 & 1 & 1 \\ 1 & -1 & 1\\ 1 & 1 & -1 \end{pmatrix}= [ b_1,b_2,b_3 ]
 \end{equation}
and, in particular, $\det F=\det B=4^{-1}$. Let $H_\mu=B \mu F^{-1}, \ \mu \in \Sl$ denote the lattice transformation from $\ml_0$ to $\ml_1$ (cf. Lemma~\ref{LemmaLat}). By definition $H_\mu$ is optimal if $\mu$ satisfies \eqref{muMin}. We first show the optimality with respect to $\dm_1$ and $\dm_2$. We may find an upper bound on the minimum by only considering $\mu \in \SL{1}$ (cf. Definition~\ref{Defslk}). With the help of a computer we find exactly $72$ such $\mu$'s and all corresponding deformations are (volume preserving) Bain strains. To complete the proof we employ our key Lemma~\ref{LemmaKey} to show that any $\mu\in \slz\backslash \SL{1}$ cannot be optimal with respect to either $\dm_1$ or $\dm_2$. Let $\mu\in \mathbb{Z}^{3 \times 3}$ be given by 
\begin{equation}\label{muAnsatz}
 \mu=\begin{pmatrix} \al_1 & \al_2 & \al_3 \\ \beta_1 & \beta_2 & \beta_3\\ \gamma_1 & \gamma_2 & \gamma_3 \end{pmatrix}.
\end{equation}
Then $ b_{\mu,i}=H_\mu f_i=B\mu e_i=\al_i b_1+\beta_i b_2+\gamma_i b_3$ and after dropping the index $i$ we obtain
\begin{equation}\label{Eqb}
 |b_\mu|^2=(\al^2+\beta^2+\gamma^2) |b_1|^2+2(\al\beta+\beta\gamma+\al\gamma) \ssp{b_1,b_2}, 
\end{equation}
where we have used that $|b_i|=|b_j|$ and $\ssp{b_i,b_j}=\ssp{b_k,b_l}$ for all $i \neq j$ and $k\neq l$. We compute $|f_i|=2^{-1/2}$, $|b_i|^2=3 \cdot 2^{-8/3}$ and $\ssp{b_i,b_j}=-2^{-8/3}$. By \eqref{d2bound} we estimate 
\begin{alignat}{3}
 \dm_1(H_\mu,\id) &\geq \frac{ 2^{-4/3}\left(\rho(\al,\beta,\gamma)\right)^{1/2}}{2^{-1/2}}-1, \label{d1FinalLB} \\
 \dm_2(H_\mu,\id) &\geq \frac{ 2^{-8/3}\rho(\al,\beta,\gamma)}{2^{-1}}-1 \label{d2FinalLB}
\end{alignat}
where $ \rho(\al,\beta,\gamma):=\al^2+\beta^2+\gamma^2+(\al-\beta)^2+(\beta-\gamma)^2+(\al-\gamma)^2$. If $\mu \in \slz\backslash \SL{1}$ then $ \rho(\al,\beta,\gamma)\geq 8$ and thus 
\begin{equation}  \no
  \dm_1(H_\mu,\id)\geq 2^{2/3}-1>0.5\gg m_{\min,1}\aand \dm_2(H_\mu,\id)\geq 2^{4/3}-1>1.5 \gg m_{\min,2}.
\end{equation}
To show $\dm_{-2}$-optimality we consider $H_{\mu}=B(F\mu^{-1})^{-1}$ and use the ansatz \eqref{muAnsatz} for $\mu^{-1}$ instead of $\mu$. We compute
\begin{equation}\no
 |F\mu^{-1} e_i|^2=\frac{1}{4}\left((\al+\beta)^2+(\beta+\gamma)^2+(\al+\gamma)^2\right)
\end{equation}
and we note that $b_i=H_{\mu}F\mu^{-1} e_i$. Thus by \eqref{d-2bound} we can estimate
\begin{alignat}{3}\label{d-2FinalLB}
 \dm_{-2}(H_\mu,\id) \geq \frac{ \frac{1}{4}\s(\al,\beta,\gamma)}{3 \cdot 2^{-8/3}}-1=2 \cdot 2^{2/3}-1 >2.17 \gg m_{\min,-2}, 
\end{alignat}
where $\s(\al,\beta,\gamma):=(\al+\beta)^2+(\beta+\gamma)^2+(\al+\gamma)^2$ and we have used that $\s(\al,\beta,\gamma)\geq 6$ for $\mu^{-1} \in \slz \backslash \SL{1}$. Therefore the $\dm_{-2}$-optimal $\mu$ is contained in $\SL{-1}$.
\end{proof}

\begin{Corollary}\label{Coralpha}
The three Bain strains remain the $\dm_r$-optimal lattice transformations, $r=1,2,-2$, from fcc-to-bcc if the volume changes by $\la^3$, provided that $\la >0.84$ for $r=1$, $\la >0.64$ for $r=2$ and $\la <1.19$ for $r=-2$. The stretch components of the three optimal equivalence classes are given by 
\begin{equation}\no
 \bar H^\la_{\min} \in \lbrace \la \diag(2^{-1/3},2^{1/6},2^{1/6}), \la \diag(2^{1/6},2^{-1/3},2^{1/6}), \la \diag(2^{1/6},2^{1/6},2^{-1/3})\rbrace.
\end{equation}
The minimal metric distances are given by  
\begin{alignat*}{2}
 m^\la_{\min,1}&=\dm_1({ H^\la_{\min}},\id)=\sqrt{\left(2^{-1/3}\la-1\right)^2+2\left(2^{1/6}\la-1\right)^2},\\
 m^\la_{\min,2}&=\dm_2({ H^\la_{\min}},\id)=\sqrt{\left(2^{-2/3}\la^2-1\right)^2+2\left(2^{1/3}\la^2-1\right)^2},\\
  m^\la_{\min,-2}&=\dm_{-2}({ H^\la_{\min}},\id)=\sqrt{\left(2^{2/3}\la^{-2}-1\right)^2+2\left(2^{-1/3}\la^{-2}-1\right)^2}.
\end{alignat*}
\end{Corollary}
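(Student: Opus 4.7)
The plan is to repeat the argument of Theorem~\ref{ThBain} essentially verbatim with the bcc generator $B$ replaced by $\la B$. With this rescaling, $\ml_0=\ml(F)$, $\ml_1=\ml(\la B)$, every lattice transformation takes the form $H_\mu=\la B\mu F^{-1}$ with $\mu\in\slz$, and $\det H_\mu=\la^3$ as required. The $72$ Bain-type matrices identified in Theorem~\ref{ThBain} now yield stretch components $\la\,\diag(2^{-1/3},2^{1/6},2^{1/6})$ together with its two cyclic variants, and substituting these scaled stretches into \eqref{Hdr} gives the three closed forms for $m^\la_{\min,r}$ in the statement.

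For $\mu\in\slz\setminus\SL{1}$ (respectively $\mu^{-1}\notin\SL{1}$ for $r=-2$), the computations \eqref{d1FinalLB}--\eqref{d-2FinalLB} carry over with the obvious $\la$-scaling: since $|H_\mu f_i|^2=\la^2\cdot 2^{-8/3}\rho(\alpha_i,\beta_i,\gamma_i)$ and $\rho\geq 8$ on any column of $\mu$ containing an entry of absolute value at least $2$, the Key Lemma~\ref{LemmaKey} yields
\[
\dm_1(H_\mu,\id)\geq \la\cdot 2^{2/3}-1,\qquad \dm_2(H_\mu,\id)\geq \la^2\cdot 2^{4/3}-1,
\]
and, applied to $H_\mu^{-1}$ when $\sigma\geq 6$, $\dm_{-2}(H_\mu,\id)\geq 2\cdot 2^{2/3}/\la^2-1$. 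The stated thresholds $\la>0.84$, $\la>0.64$ and $\la<1.19$ are precisely the ranges on which these bounds strictly exceed the corresponding $m^\la_{\min,r}$, so that any $\dm_r$-optimal $\mu$ must lie in $\SL{1}$ (resp.\ $\SL{-1}$); each such comparison reduces to a one-variable polynomial inequality in $\la$ that can be verified directly.

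The main obstacle is that the bound coming from a single column with large $\rho$ weakens rapidly as $\la$ moves away from $1$, especially for $r=2,-2$. I would therefore strengthen it using the volume constraint $\det H_\mu=\la^3$: once $\nu_{\max}^2\geq \la^2\cdot 2^{4/3}$ is known, at least one other eigenvalue of $H_\mu$ must satisfy $\nu^2\leq \la^2\cdot 2^{-2/3}$, contributing an additional $(\la^2\cdot 2^{-2/3}-1)^2$ term to $\dm_2(H_\mu,\id)^2$ and an analogous extra term to $\dm_{-2}(H_\mu,\id)^2$. The polynomial inequality obtained from this refined lower bound is what produces the sharp thresholds in the statement, and the remaining verification that no other element of $\SL{1}$ (resp.\ $\SL{-1}$) beats the Bain strain in the admissible $\la$-range is a finite check on the same $3480$-element set already enumerated in Theorem~\ref{ThBain}, with the distance gap $\Delta$ of the remark following Theorem~\ref{TheCpct} ensuring numerical robustness.
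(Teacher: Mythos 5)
Your overall strategy coincides with the paper's: replace $B$ by $\la B$ (equivalently $\mu\to\la\mu$ in the ansatz \eqref{muAnsatz}), observe that the $72$ Bain-type changes of basis give the scaled stretch components and the stated formulae for $m^\la_{\min,r}$, and rescale the Key-Lemma lower bounds of Theorem~\ref{ThBain} to exclude $\slz\setminus\SL{1}$. For $r=1$ and $r=-2$ your bounds $2^{2/3}\la-1$ and $2\cdot2^{2/3}\la^{-2}-1$ are exactly the paper's and do yield the thresholds $\la>0.84$ (up to rounding; the actual crossover is at $\la\approx0.844$) and $\la<1.19$ (with considerable slack), so these cases are fine modulo the routine caveat, which you correctly flag, that the finite comparison inside $\SL{1}$ must be re-examined for each admissible $\la$, since the ordering of $\dm_r(\la H_\mu,\id)$ over $\SL{1}$ depends on $\la$.

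The genuine gap is the $r=2$ case. The correctly rescaled bound is, as you write, $\dm_2(H_\mu,\id)\geq2^{4/3}\la^2-1$, and this exceeds $m^\la_{\min,2}$ only for $\la\gtrsim0.81$. Your proposed repair via the volume constraint does not reach $\la>0.64$: the extra term $(\la^2 2^{-2/3}-1)^2$ gained from $\nu_{\min}^2\leq\la^2 2^{-2/3}$ cancels \emph{identically} against the term $(2^{-2/3}\la^2-1)^2$ already present in $(m^\la_{\min,2})^2$, so the refined inequality reduces to $(2^{4/3}\la^2-1)^2\geq2(2^{1/3}\la^2-1)^2$, which fails for $\la^2<(1+\sqrt2)/(2^{4/3}+2^{5/6})$, i.e.\ for $\la<0.75$. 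Even the sharpest use of the determinant constraint (minimising $(\nu_2^2-1)^2+(\nu_3^2-1)^2$ subject to $\nu_2\nu_3\leq\la^2 2^{-2/3}$, which yields $1-2^{-1/3}\la^4$ in place of $(\la^2 2^{-2/3}-1)^2$) only lowers the admissible threshold to about $\la\geq0.68$; at $\la=0.64$ it gives a lower bound of roughly $0.93$ against $m^\la_{\min,2}\approx1.01$. So the claim that your refinement ``produces the sharp thresholds in the statement'' is not substantiated for $r=2$. You are in good company: the paper's own display at this point is internally inconsistent (the middle term $2^{-8/3}\la^2\rho/2^{-1}$ equals $2^{-5/3}\la^2\rho$, which at $\rho=8$ gives $2^{4/3}\la^2-1$, whereas the stated conclusion uses $2^{-2/3}\la^2\cdot8-1=2^{7/3}\la^2-1$, and only the latter produces $\la>0.64$). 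But your proposal does not close this gap either; to justify $\la>0.64$ one would need a genuinely sharper exclusion argument outside $\SL{1}$, or the stated threshold would have to be weakened to roughly $\la>0.81$.
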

\begin{proof}
 Replace $\mu\ra \la \mu$ in \eqref{muAnsatz} in the proof of Theorem~\ref{ThBain}. Then \eqref{d1FinalLB}, \eqref{d2FinalLB} and \eqref{d-2FinalLB} respectively read
 \[
\begin{array}{r@{\,}c@{\,}c@{\,}c@{\,}l@{\,}l}
 \dm_1(H^\la_\mu,\id)\geq & \frac{ 2^{-4/3}\left(\la^2\rho(\al,\beta,\gamma)\right)^{1/2}}{2^{-1/2}}-1&\geq&  \frac{2^{1/6}}{2}\la \cdot \inf_\mathcal{S} \rho^{1/2}-1,\\  
   \dm_2(H^\la_\mu,\id) \geq& \frac{ 2^{-8/3}\la^2\rho(\al,\beta,\gamma)}{2^{-1}}-1&\geq& 2^{-2/3}\la^2\cdot \inf_\mathcal{S} \rho-1,\\
  \dm_{-2}(H^\la_\mu,\id)\geq & \frac{ \frac{1}{4}\s(\al,\beta,\gamma)}{3 \cdot 2^{-8/3}\la^2}-1&\geq& \frac{2^{2/3}}{3\la^2} \cdot \inf_\mathcal{S} \s -1.
\end{array}
\]
If as above $\mathcal{S}=\slz \backslash \SL{1}$ then $\inf_\mathcal{S}\rho=8$ and thus $\dm_1(H^\la_\mu,\id)\geq m^\la_{\min,1}$ for $\la >0.84$ and $\dm_2(H^\la_\mu,\id) \geq m^\la_{\min,2}$ for $\la >0.64$ and $\inf_\mathcal{S}\s=6$ so that $\dm_{-2}(H^\la_\mu,\id)\geq m^\la_{\min,-2}$ for $\la <1.19$.
\end{proof}
The following Remark concerns the relationship between the $72$ minimising states.

 \begin{Remark}{(Relations between the minimal deformations for fcc-to-bcc)} \label{PropRel}\\
  Let $\ml(F)$ and $\ml(B)$ be the fcc and bcc lattices respectively. Let $\mu_0$ be one of the optimal changes of lattice basis and $H_i=B\mu_iF^{-1}, \quad i=0,\dotsc, 71$ denote the $72$ optimal lattice deformations associated to the optimal changes of basis $\mu_i \in \Sl$ given by Theorem~\ref{ThBain}. Then 
  all optimal $H_i$'s and corresponding $\mu_i$'s are given by
\begin{equation}\no
 H_{PQ}=PH_0Q=B\mu_{PQ}F^{-1}, \quad P,Q \in \PP,
\end{equation}
where $
 \mu_{PQ}=B^{-1}PB\mu_0F^{-1}QF\in\Sl.
$
We note that the latter equation holds since $\PP$ is the point group of both cubic lattices and thus
$B^{-1}PB$ and $F^{-1}QF$ are contained in $\slz$. Since there are only three equivalence classes of optimal lattice transformations, the 72 optimal changes of lattice basis split into three sets of 24 $\mu_{PQ}$'s such that the 24 corresponding $H_{PQ}$'s lie in the same equivalence class.
\end{Remark}
\subsection{Stability of the Bain strain}\label{subsecBainStab}
Theorem~\ref{ThBain} showed that the Bain strain is optimal in an fcc-to-bcc phase transformation. In this section, we restrict our attention to $r=1,2$ and show that the Bain strain remains optimal for a range of lattice parameters in an fcc-to-bct phase transformation. This type of transformation is found in steels with higher carbon content. The strategy of the proof is to treat the bct phase as a perturbation of the bcc phase. To this end, for $B$ as in \eqref{FB}, let the bct lattice be generated by
 \begin{equation}\label{Optbct}
  B_\ac=\diag(A,A,C)B = 2^{-4/3}   \left( \begin {array}{ccc} -A&A&A\\ \noalign{\medskip}
A&-A&A\\ \noalign{\medskip}C&C&-C
\end {array} \right),
 \end{equation}
so that $C$ denotes the elongation (or shortening) of the bcc cell in the $z$-direction and $A$ the elongation (or shortening) in the $x$- and $y$-direction. We note that, since $\PP = \slz \cap \SO$, the lattice $\ml(B_\ac)$ is equivalent to the lattices $\ml(\diag(A,C,A)B)$ and $\ml(\diag(C,A,A)B)$. Further we define $m^{0,\ac}_{i}=\dm_i(\diag(2^{1/6}A,2^{1/6}A,2^{-1/3}C),\id)$.

The following proposition provides the most important ingredient. 

\begin{Proposition}{(``The first excited state'')} \label{PropEx} \\ 
In a volume preserving transformation from an fcc-to-bcc lattice the second to minimal deformation distances are given by
\begin{equation}\no
 m_1^1\colonequals \min_{\mu\in \slz \atop \mu \neq \mu_{\min} }\dm_1 ({H_\mu},\id)\approx 0.70 \ \ \aand \ \  m_2^1\colonequals \min_{\mu\in \slz \atop \mu \neq \mu_{\min} }\dm_2 ({H_\mu},\id)\approx 1.64.
\end{equation}
In particular, all $H_\mu$ with $\mu \in \slz \backslash \SL{2}$ have distance strictly larger than $m_r^1$, $r=1,2$.
\end{Proposition}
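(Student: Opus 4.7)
The plan is to mimic the proof of Theorem~\ref{ThBain}: combine the column-wise lower bounds from the Key Lemma with a sharpened integer optimisation so as to discard every $\mu \in \slz \setminus \SL{2}$, and then identify $m_r^1$ by finite search over $\SL{2}$. First, I reuse the estimates already established in the proof of Theorem~\ref{ThBain}: applied columnwise to $\mu \in \slz$ with columns $(\al_i,\be_i,\ga_i)^\T$, inequalities \eqref{d1FinalLB} and \eqref{d2FinalLB} yield
\begin{equation*}
 \dm_1(H_\mu,\id) \geq 2^{-5/6}\sqrt{\max_i \rho_i} - 1, \qquad \dm_2(H_\mu,\id) \geq 2^{-5/3}\max_i \rho_i - 1,
\end{equation*}
where $\rho_i = \rho(\al_i,\be_i,\ga_i)$. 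Where Theorem~\ref{ThBain} only needed the crude bound $\max_i \rho_i \geq 8$, I now need a sharper bound that kicks in as soon as $\mu \notin \SL{2}$.

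This sharper bound is the one real calculation in the argument: I claim that if a column $(\al,\be,\ga)^\T \in \mathbb{Z}^3$ has some entry of absolute value $\geq 3$, then $\rho(\al,\be,\ga) \geq 19$. Since $\rho$ is symmetric in its three arguments and invariant under a global sign flip, it suffices to treat the case $\al \geq 3$. A short gradient computation shows that the continuous minimum of $\rho(\al,\cdot,\cdot)$ over $\mathbb{R}^2$ is $2\al^2$, attained at $\be=\ga=\al/2$; this already gives $\rho \geq 32$ for $\al \geq 4$. For the boundary case $\al=3$, where the real optimum $\be=\ga=3/2$ is non-integer, a direct check of the four closest integer pairs $(\be,\ga)\in\{1,2\}^2$ gives $\rho \geq 19$, with equality at $(1,1)$ and $(2,2)$. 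Substituting $\max_i \rho_i \geq 19$ into the two lower bounds then produces
\begin{equation*}
 \dm_1(H_\mu,\id) \geq 2^{-5/6}\sqrt{19}-1 > 1.4, \qquad \dm_2(H_\mu,\id) \geq 2^{-5/3}\cdot 19 - 1 > 4.9,
\end{equation*}
which already exceed the claimed $m_1^1 \approx 0.70$ and $m_2^1 \approx 1.64$ by a comfortable margin, so no $\mu \notin \SL{2}$ can attain the second-minimal distance.

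Only a finite search then remains: by Definition~\ref{Defslk} the set $\SL{2}$ has $67\,704$ elements, so the distances $\dm_r(H_\mu,\id)$ can be enumerated exactly by machine using the implementation in the appendix. Discarding the $72$ Bain minimisers identified in the proof of Theorem~\ref{ThBain} and taking the next smallest value then yields the approximate second minima $m_1^1 \approx 0.70$ and $m_2^1 \approx 1.64$. The main obstacle is really the sharp integer bound $\rho \geq 19$; note that the looser $\rho \geq 8$ used in Theorem~\ref{ThBain} would only give $\dm_1 \geq 2^{2/3}-1 \approx 0.59$, which falls just short of $m_1^1 \approx 0.70$ and is therefore insufficient to exclude $\mu \notin \SL{2}$. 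Once the improved bound is in hand, however, the rest of the proof is mechanical and has ample numerical slack.
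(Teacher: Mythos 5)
Your proposal is correct and follows essentially the same route as the paper: a finite computer search over $\SL{2}$ for the second-minimal distance, combined with the lower bounds \eqref{d1FinalLB} and \eqref{d2FinalLB} to rule out $\mu\in\slz\backslash\SL{2}$. Your explicit integer minimisation showing $\rho\geq 19$ whenever a column has an entry of modulus at least $3$ is exactly the quantitative detail the paper leaves implicit (and it is genuinely needed, since the cruder bounds $\rho\geq 8$ or $\rho\geq 9$ give only $\dm_1\gtrsim 0.59$ or $0.68$, which fall short of $m_1^1\approx 0.70$), so your write-up is, if anything, more complete than the paper's.
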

\begin{proof} 
For brevity let us call any deformation $H_\mu$ and the corresponding change of basis $\mu$ that has deformation distance $m^1_r, \, r=1,2$ an \emph{excited state}. The proof follows along the same lines as the proof of Theorem~\ref{ThBain}. First we show with the help of a computer that the second to minimal deformation distance within $\SL{2}$ is given by the above and by \eqref{d1FinalLB} and \eqref{d2FinalLB} respectively applied on $\slz \backslash \SL{2}$ we know that there cannot be any excited states in $\slz\backslash \SL{2}$.
\end{proof}

\begin{Corollary}{(``The first excited state'' with volume change)}\label{CorEx}\\
 In a transformation from an fcc-to-bcc lattice with volume change $\la^3$ the second to minimal deformation distances are given by
\begin{alignat*}{2}
m_1^{1,\la}&\colonequals \min_{\mu\in \slz \atop \mu \neq \mu_{\min}}\dm_1 ({H^\la_\mu},\id)= 2^{-3/2}{\sqrt{25\  {2}^{1/3} \la^2-4\ 2^{2/3} \left(4+\sqrt{17}\right) \la+24}},\\ 
 m_2^{1,\la}&\colonequals \min_{\mu\in \slz \atop \mu \neq \mu_{\min}}\dm_2 ({H^\la_\mu},\id)= 2^{-3}\ \sqrt {305\ {2}^{2/3}\la^4-400\ {2}^{1/3}\la^2+192}.
\end{alignat*}
In particular all $H^\la_\mu$ with $\mu \in \slz \backslash \SL{2}$ have distance strictly larger than $m_r^{1,\la}$, $r=1,2$.
\end{Corollary}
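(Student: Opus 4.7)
The strategy mirrors the proof of Proposition~\ref{PropEx}, performing the substitution $\mu\mapsto \la\mu$ in the ansatz \eqref{muAnsatz} exactly as was done in Corollary~\ref{Coralpha}. Writing $H^\la_\mu=\la B\mu F^{-1}$, each column computation in the proof of Theorem~\ref{ThBain} picks up an overall factor of $\la$, and the Key Lemma~\ref{LemmaKey} immediately delivers the $\la$-rescaled analogues of the lower bounds \eqref{d1FinalLB} and \eqref{d2FinalLB}.

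The argument then splits into two parts. First, to rule out excited states in $\slz\backslash\SL{2}$, note that any such $\mu$ has at least one entry with $|\mu_{mn}|\ge 3$; a short integer minimisation of the positive-definite quadratic form $\rho(\al,\be,\ga)$ (whose eigenvalues are $1,4,4$) then gives $\max_i\rho(\al_i,\be_i,\ga_i)\ge 19$ for the column containing such an entry. Substituting this into the $\la$-rescaled versions of \eqref{d1FinalLB} and \eqref{d2FinalLB} produces linear-in-$\la$ (respectively quadratic-in-$\la$) lower bounds for $\dm_1(H^\la_\mu,\id)$ and $\dm_2(H^\la_\mu,\id)$ that strictly exceed the claimed $m_r^{1,\la}$ on the $\la$-range relevant for the subsequent stability analysis, since at $\la=1$ the bounds already leave a comfortable margin over the values $m_r^1$ recorded in Proposition~\ref{PropEx}.

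Second, for $\mu\in\SL{2}$ one performs a finite computer search over the $67\,704$ candidates, computing the three singular values $\nu_i(H^\la_\mu)$ as algebraic functions of $\la$ and plugging them into \eqref{Hdr} to express $\dm_r(H^\la_\mu,\id)$ explicitly. After discarding the $72$ Bain minimisers of Theorem~\ref{ThBain}, the second-smallest distance at $\la=1$ is attained at the excited-state $\mu$'s already singled out in Proposition~\ref{PropEx}; a direct evaluation of \eqref{Hdr} on the corresponding $\la$-dependent singular values produces the closed-form expressions for $m_r^{1,\la}$ stated above.

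The principal obstacle is verifying that the identity of the second-minimising $\mu$ within $\SL{2}$ does not switch as $\la$ varies, for \emph{a priori} different excited candidates could cross each other as $\la$ moves away from $1$. This is settled by comparing the explicit $\la$-polynomials produced by the finite search and confirming that the curves $\la\mapsto m_r^{1,\la}$ remain strictly below every other candidate throughout the $\la$-window in which the Corollary is subsequently applied to the fcc-to-bct transformation.
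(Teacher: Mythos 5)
Your proposal is correct and follows essentially the route the paper intends: Corollary~\ref{CorEx} is stated without proof, being the combination of the finite computer search of Proposition~\ref{PropEx} inside $\SL{2}$ with the $\la$-scaling device of Corollary~\ref{Coralpha}, and your rescaled versions of \eqref{d1FinalLB}--\eqref{d2FinalLB} together with the integer minimisation $\rho\geq 19$ on any column containing an entry of modulus at least $3$ is exactly the intended exclusion of $\slz\backslash\SL{2}$. Your two added cautions --- that the second-minimising $\mu$ must not cross another candidate as $\la$ varies, and that the exclusion bound $2^{-5/6}\sqrt{19}\,\la-1>m_1^{1,\la}$ only holds on a restricted $\la$-window (it fails for small $\la$, roughly $\la\lesssim 0.68$) --- are legitimate refinements of the paper's unqualified statement rather than departures from its method.
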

\begin{Theorem}\label{ThStab}
 The Bain strain is a $\dm_1$- and $\dm_2$-optimal lattice transformation from fcc-to-bct with lattice parameters $A, C$ in the
 range 
 \begin{alignat*}{3}
   &\lbrace (A,\,C): C\geq A>0.75 \aand m^1_{1}-3^{3/2}|B_\ac-B|\geq m_{1}^{0,\ac} \rbrace &&\quad\for r=1,\\
   &\lbrace (A,\,C): C\geq A>0.75 \aand m^1_{2}-27|B_\ac\T B_\ac-B\T B|\geq m_{2}^{0,\ac}  \rbrace &&\quad\for r=2,
 \end{alignat*}
 (cf. Figure~\ref{Figd0}). For $C>A$ the stretch components of the optimal lattice transformations are given by
$\bar H^\ac_{\min}$ in the set
\begin{alignat}{3}\no
\small{\left\lbrace \begin{pmatrix} 2^{1/6}A & 0 & 0 \\ 0 & 2^{1/6}A & 0\\ 0 & 0 & 2^{-1/3}C \end{pmatrix},\begin{pmatrix} 2^{1/6}A & 0 & 0 \\ 0 & 2^{-1/3}C & 0\\ 0 & 0 & 2^{1/6}A \end{pmatrix},
\begin{pmatrix} 2^{-1/3}C & 0 & 0 \\ 0 & 2^{1/6}A & 0\\ 0 & 0 & 2^{1/6}A \end{pmatrix}\right\rbrace.}
\end{alignat}
The respective minimal metric distances are  
\begin{alignat}{3}
 m^\ac_{\min,1}&=\dm_1({ H^\ac_{\min}},\id)=\left(2\, \left( 2^{1/6}A-1 \right) ^{2}+ \left( 2^{-1/3}C-1 \right) ^{2}\right)^{1/2}=m_{1}^{0,\ac}, \label{Eqmacmin1}\\
 m^\ac_{\min,2}&=\dm_2({ H^\ac_{\min}},\id)=\left(2\, \left( \left(2^{1/6}A\right)^{2}-1 \right) ^{2}+ \left( \left(2^{-1/3}C\right)^2-1 \right) ^{2}\right)^{1/2}=m_{2}^{0,\ac} \label{Eqmacmin2}
\end{alignat}
and are achieved by exactly $24$ distinct $\mu \in \slz$. The case $C=A$ corresponds to an fcc-to-bcc transformation with volume change $\la^3$ with $\la=A=C$ and we refer to Corollary~\ref{Coralpha}. 
\end{Theorem}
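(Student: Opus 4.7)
The plan is to obtain Theorem~\ref{ThStab} by treating $B_\ac$ as a perturbation of $B$ and leveraging both Theorem~\ref{ThBain} and the gap estimate of Proposition~\ref{PropEx}. I would partition the search over $\mu\in\slz$ into (a) the $72$ Bain minimisers of the fcc-to-bcc case, (b) the remaining $\mu \in \SL{2}$, and (c) $\mu \notin \SL{2}$, treating each of these three regimes separately.

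For case (a), I would directly evaluate the stretch tensor $\sqrt{(B_\ac \mu_i F^{-1})\T (B_\ac \mu_i F^{-1})}$ for each of the $72$ Bain minimisers $\mu_i$ listed in Remark~\ref{PropRel}. These form three equivalence classes of $24$ corresponding to the three cubic axes; substituting $B_\ac = \diag(A,A,C)B$ breaks the $\PP$ symmetry and, for $C>A$, singles out the equivalence class whose Bain contraction axis is aligned with the $C$-axis of $B_\ac$ (yielding precisely the three stretches listed in the theorem statement), attaining $m^{0,\ac}_r$, while the other $48$ give strictly larger distances.

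For case (b), Proposition~\ref{PropEx} provides $\dm_r(H_\mu,\id) \geq m^1_r$, and the triangle inequality for the pseudometric $\dm_r$ then yields
\[
\dm_r(H^\ac_\mu, \id) \geq m^1_r - \dm_r(H^\ac_\mu, H_\mu).
\]
For $r=1$ I would exploit that $\dm_1(\cdot,\id) = \dist(\cdot,\SO)$ is $1$-Lipschitz in the Frobenius norm, so $\dm_1(H^\ac_\mu, H_\mu) \leq |(B_\ac - B)\mu F^{-1}|$; combining this with the sub-multiplicativity of Lemma~\ref{LemFrob} and a uniform bound on $|\mu F^{-1}|$ across $\SL{2}$, derived from the explicit spectrum $\{1,-2,-2\}$ of $F^{-1}$, produces the constant $3^{3/2}$. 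For $r=2$ I would expand
\[
(H^\ac_\mu)\T H^\ac_\mu - H_\mu\T H_\mu = F^{-T}\mu\T (B_\ac\T B_\ac - B\T B)\mu F^{-1}
\]
and estimate analogously to obtain the constant $27$. The hypotheses of the theorem then close both sub-cases.

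For case (c) I would repeat the key-lemma argument of Theorem~\ref{ThBain} with $B$ replaced by $B_\ac$: the analogues of \eqref{d1FinalLB}-\eqref{d2FinalLB} deliver lower bounds on $\dm_r(H^\ac_\mu, \id)$ proportional to $\min(A,C)$ times the combinatorial factor $\rho(\al,\be,\ga) \geq 8$ whenever $\mu \notin \SL{2}$, and the restriction $C \geq A > 0.75$ makes these bounds strictly exceed $m^{0,\ac}_r$. The principal technical obstacle lies in step (b): extracting the sharp numerical constants $3^{3/2}$ and $27$ requires a careful combination of Frobenius and spectral sub-multiplicativity together with specific information about both $F^{-1}$ and the particular $\mu \in \SL{2}$ realising $m^1_r$. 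A secondary difficulty is verifying in step (a) that, for each of the $48$ misaligned Bain minimisers, the perturbed distance strictly exceeds $m^{0,\ac}_r$ throughout the admissible range of $(A,C)$, which appears to require an explicit diagonalisation rather than a pure symmetry argument.
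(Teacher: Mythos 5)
Your overall architecture — perturb off the bcc case, use the first-excited-state gap of Proposition~\ref{PropEx} plus a triangle inequality for the bulk of $\slz$, and the Key Lemma far out — is exactly the paper's, and your treatment of the $72$ Bain minimisers and of the far-field regime is essentially right. The genuine gap is where you place the boundary between regimes (b) and (c). You run the perturbation estimate over all of $\SL{2}$ and claim that a uniform bound on $|\mu F^{-1}|$ over $\SL{2}$ produces the constant $3^{3/2}$. It does not: $3^{3/2}=\sqrt{27}$ is $\max_{\mu\in\SL{1}}|\mu F^{-1}|$, found by exhaustive enumeration of the $3\,480$ elements of $\SL{1}$ rather than from the spectrum $\{1,-2,-2\}$ of $F^{-1}$ (that route only gives $|\mu F^{-1}|\le |F^{-1}|_2\,|\mu|=2|\mu|\le 6$ on $\SL{1}$, which already overshoots $3^{3/2}$). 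On $\SL{2}$ the bound genuinely fails: writing $|\mu F^{-1}|^2=\sum_i r_i\T (F\T F)^{-1} r_i$ over the rows $r_i$ of $\mu$, the matrix with rows $(1,-1,1)$, $(-1,2,-1)$, $(1,-1,2)$ has determinant $1$, lies in $\SL{2}\setminus\SL{1}$, is not one of the $72$ minimisers, and satisfies $|\mu F^{-1}|^2=11+24+20=55>27$. Hence your case (b) cannot deliver the constants $3^{3/2}$ and $27$ that appear in the admissible region of the theorem; you would prove the statement only on a strictly smaller set of $(A,C)$.

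The fix is the paper's partition: draw the line at $\SL{1}$, not $\SL{2}$. The excited-state bound $\dm_r(H_\mu,\id)\ge m_r^1$ holds for \emph{every} non-minimising $\mu\in\slz$, so it may be combined with $\max_{\mu\in\SL{1}}|\mu F^{-1}|=3^{3/2}$ on the non-minimising elements of $\SL{1}$, while everything in $\slz\setminus\SL{1}$ (in particular all of $\SL{2}\setminus\SL{1}$) is disposed of by the Key-Lemma estimate with $\rho\ge 8$ and $|b_\ac|\ge A|b|$, giving $\dm_1(H^\ac_\mu,\id)\ge 2^{2/3}A-1$ and its $\dm_2$ analogue; the perturbation argument is never needed outside $\SL{1}$. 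Two minor further points: in your case (a) the $24$ surviving minimisers do not constitute the single class ``aligned with the $C$-axis'' but split $8$--$8$--$8$ into the three listed equivalence classes, and the comparison with the other $48$ is carried out by an explicit computation for one representative $\tilde B_\ac$ (the remaining $47$ being $\PP$-conjugate to it), which is what produces the constraints $C>A$ and $A>0.75$.
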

\begin{figure}[h]
\centering
\begin{minipage}{.5\textwidth}
  \centering
  \vspace{.05cm}
  \includegraphics[width=.9\linewidth]{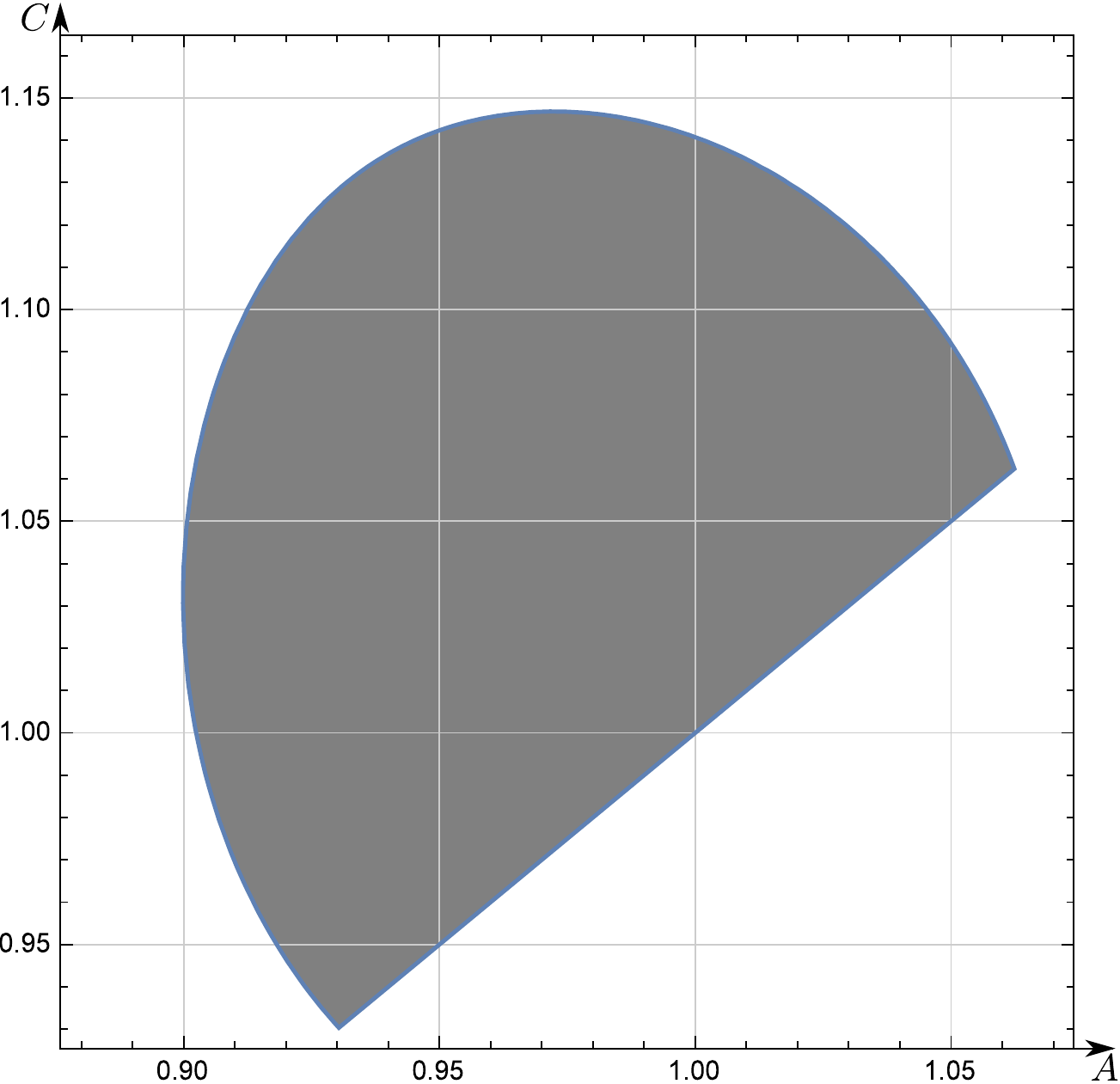}
\end{minipage}%
\begin{minipage}{.5\textwidth}
  \centering
  \includegraphics[width=.9\linewidth]{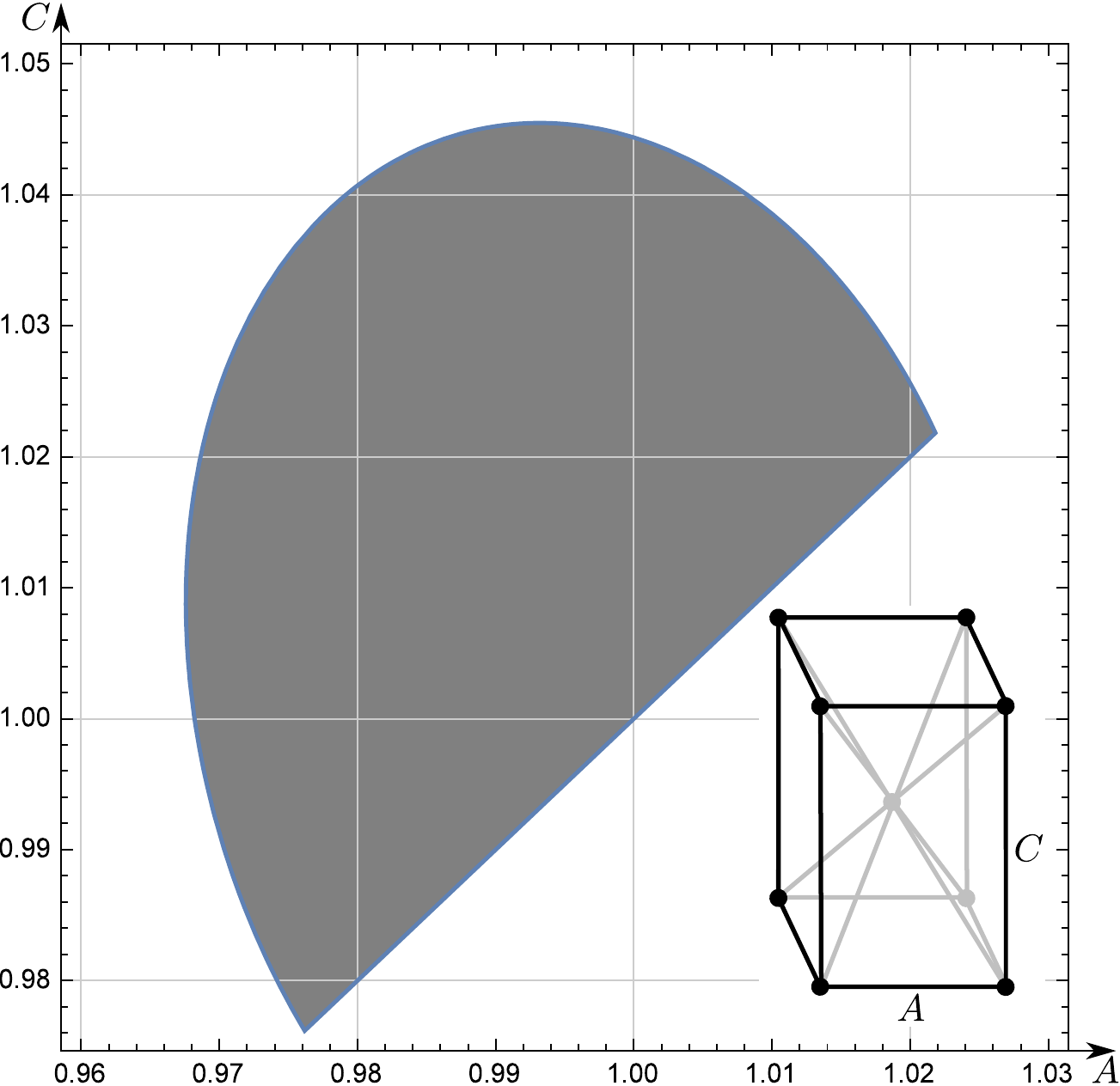}
\end{minipage}
\caption{The range of $A,C$ where the Bain strain remains $\dm_1$-optimal (left) and $\dm_2$-optimal (right).}\label{Figd0}
\end{figure}
\begin{Example}
   $A=C=1$ recovers Theorem~\ref{ThBain}.
   $C>A$ corresponds to the usual fcc-to-bct transformation for steels with higher carbon content.
   $C=\sqrt{2}A=2^{1/3}$ is the bct lattice that is contained in the fcc lattice, i.e. $\dm(\ml_0,\ml_1)=0$.
\end{Example}
\begin{proof}[Proof (of Theorem~\ref{ThStab})]
We will show that precisely 24 of the 72 $\mu$'s that were optimal in the fcc-to-bcc transition remain optimal. Let us start from one of those optimal transformations from fcc-to-bcc given by e.g.
 \begin{equation}\no
  \mu_0=\begin{pmatrix} 1 & 1 & 1 \\ 0 & 1 & 0\\ 0 & 1 & 1 \end{pmatrix}.
 \end{equation}
We know that $H^\ac_{\mu_0}=B_\ac\mu_0F^{-1}$ is optimal for $A=C=1$, where $B_\ac$ is given by \eqref{Optbct}. The deformation distance is $\dm_r({ H^\ac_{\mu_0}},\id)=m_{\min,r}^\ac$ with $m_{\min,r}^\ac$ given by \eqref{Eqmacmin1} and \eqref{Eqmacmin2} respectively. Further there exist $24$ different $\mu$'s and corresponding $H^\ac_\mu$'s that have the same distance. This follows as in Remark~\ref{PropRel} with the only difference that the point group of the bct lattice only has $8$ elements. Then, due to the invariance of $\dm_1$ and $\dm_2$ under multiplication from the left or right by any rotation, the 24 matrices $H^\ac_\mu$ trivially have the same distance to $\id$ and one may 
easily verify that these $H^\ac_\mu$'s  are equally split into the three equivalence classes as in the statement of the Theorem.

The remaining $48$ $\mu$'s that were optimal for fcc-to-bcc lead to a larger deformation distance. One of these non-optimal lattices is generated by
  \begin{equation}\no
  \tilde B_\ac=2^{-4/3}   \left( \begin {array}{ccc} -A&A&0\\ \noalign{\medskip}
A&A&0\\ \noalign{\medskip}-C&-C&-2C
\end {array} \right)
 \end{equation}
with
\begin{alignat}{3}\no
 &\left(\dm_1({\tilde B_{\ac} F^{-1}},\id)\right)^2-\left({m^\ac_{\min,1}}\right)^2={2}^{-5/3} \left( C-A \right)  \left( 2^{1/6}\,{A}^
{2}+2^{1/6}\,{C}^{2}-4+2^{3/2} \right)>0 \\
&\qquad \Leftarrow \ C>A \aand A^2+C^2 > \frac{2^{5/6}}{2-\sqrt{2}} \quad \Leftarrow \ C>A>0.75 \label{EqLowerAbound1}
\end{alignat}
and
\begin{alignat}{3}\no
 &\left(\dm_2({\tilde B_{\ac} F^{-1}},\id)\right)^2-\left({m^\ac_{\min,2}}\right)^2={2}^{-4/3} \left( C-A \right)  \left( A+C \right)  \left( 3\,{A}^
{2}+3\,{C}^{2}-2\,{2}^{2/3} \right)>0 \\
&\qquad \Leftarrow \ C>A \aand A^2+C^2 > \frac{2^{8/3}}{3} \quad \Leftarrow \ C>A>0.75 \label{EqLowerAbound2}
\end{alignat}
which holds true for all $C>A$ in the range under consideration. The remaining $47$ non-optimal deformations ${H_{\mu}^{\ac}}$ are all $\PP$ related and thus have the same distance; in particular larger than $m_{\min,r}^\ac$.

To show the minimality of the 24 $H^\ac_\mu$'s, we make use of our result on the first excited state to compare their distance to $\id$ against all the remaining $\mu$'s that were non-optimal in the fcc-to-bcc transition. In particular, we need to show that 
\begin{equation}\no
 m^\ac_{\min,i}<\min_{\mu \neq \mu_{\min}}\dm_r({{H_{\mu}^{\ac}}},\id), \quad r=1,2,
\end{equation}
where ${H_{\mu}^{\ac}}=B_\ac \mu F^{-1}$ and $\mu_{\min}$ is any of the $72$ minimising $\mu$'s from Theorem~\ref{ThBain}. Let us set $H_\mu\colonequals H_{\mu}^{\mbox{\tiny 11}}$ and estimate using the properties of $\dm_r$ (cf. Example~\ref{ExDi})
\begin{alignat}{3} \label{dtriangleest}
\min_{\mu \neq \mu_{\min}}\dm_r({{H_{\mu}^{\ac}}},\id)&\geq \min_{\mu \neq \mu_{\min}}\left(\dm_r({H_{\mu}},\id)-\dm_r({{H_{\mu}^{\ac}}},H_\mu)\right)\\ \no
 &\geq m_{r}^1-\max_{\mu \neq \mu_{\min}}\dm_r({{H_{\mu}^{\ac}}},H_\mu),
\end{alignat}
where $m_{r}^1$ denotes the first excited state (cf. Proposition~\ref{PropEx}). We estimate 
\begin{alignat*}{2}
 \dm_1({{H_{\mu}^{\ac}}},H_\mu) &\leq  |{H_{\mu}^{\ac}}-H_{\mu}| \leq |B_{\ac}-B||\mu F^{-1}|,\\
 \dm_2({{H_{\mu}^{\ac}}},H_\mu) &\leq |{H_{\mu}^{\ac}}\T {H_{\mu}^{\ac}}-H_{\mu}\T H_{\mu}| \leq |B\T _{\ac}B_{\ac}-B\T B||\mu F^{-1}|^2
\end{alignat*}
and with the help of a computer we calculate $\max_{\mu \in \SL{1}}|\mu F^{-1}|=3^{3/2}$. Thus a sufficient condition that the Bain strain is $\dm_1$-optimal within $\SL{1}$ is that $0.75<A\leq C$ satisfy
\begin{equation} \no
 m^1_{1}-3^{3/2}|B_\ac-B|\geq m_{\min,1}^\ac
\end{equation}
yielding the area drawn in Figure~\ref{Figd0} (left). To exclude any $\mu \in \slz \backslash \SL{1}$ we replace $b$ by $b_\ac=\diag(A,A,C)b$ in \eqref{Eqb} in the proof of Theorem~\ref{ThBain} and estimate $|b_\ac|\geq A |b|$. Concluding as in \eqref{d1FinalLB} we arrive at 
\begin{equation} \label{CondOptSLk}
 \min_{\mu \in \slz \backslash \SL{1}}\dm_1({H_{\mu}^{\ac}},\id) \geq  2^{2/3}A-1, 
\end{equation}
which needs to be larger than ${m_{\min,1}^{\ac}}$. This holds true e.g. for $A\in [0.85,1.7]$ and $C\in[A,1.7]$. The proof of the $\dm_2$-optimality proceeds analogously. To obtain $\dm_2$-optimality within $\SL{1}$ we need to satisfy
\begin{equation} \no
 m^1_{2}-27|B_\ac\T B_\ac-B\T B|\geq m_{\min,2}^\ac
\end{equation}
for all $0.75<A\leq C$ which yields the area drawn in Figure~\ref{Figd0} (right). To exclude all elements in the complement of $\SL{1}$ we have to ensure that $2^{4/3}A-1{>}{m_{\min,2}^{\ac}}$ which holds true e.g. for $A\in [0.75,1.5]$ and $C\in[A,1.5]$.
\end{proof}
\begin{figure}[h]
\centering
\begin{minipage}{.5\textwidth}
  \centering
  \includegraphics[width=.9\linewidth]{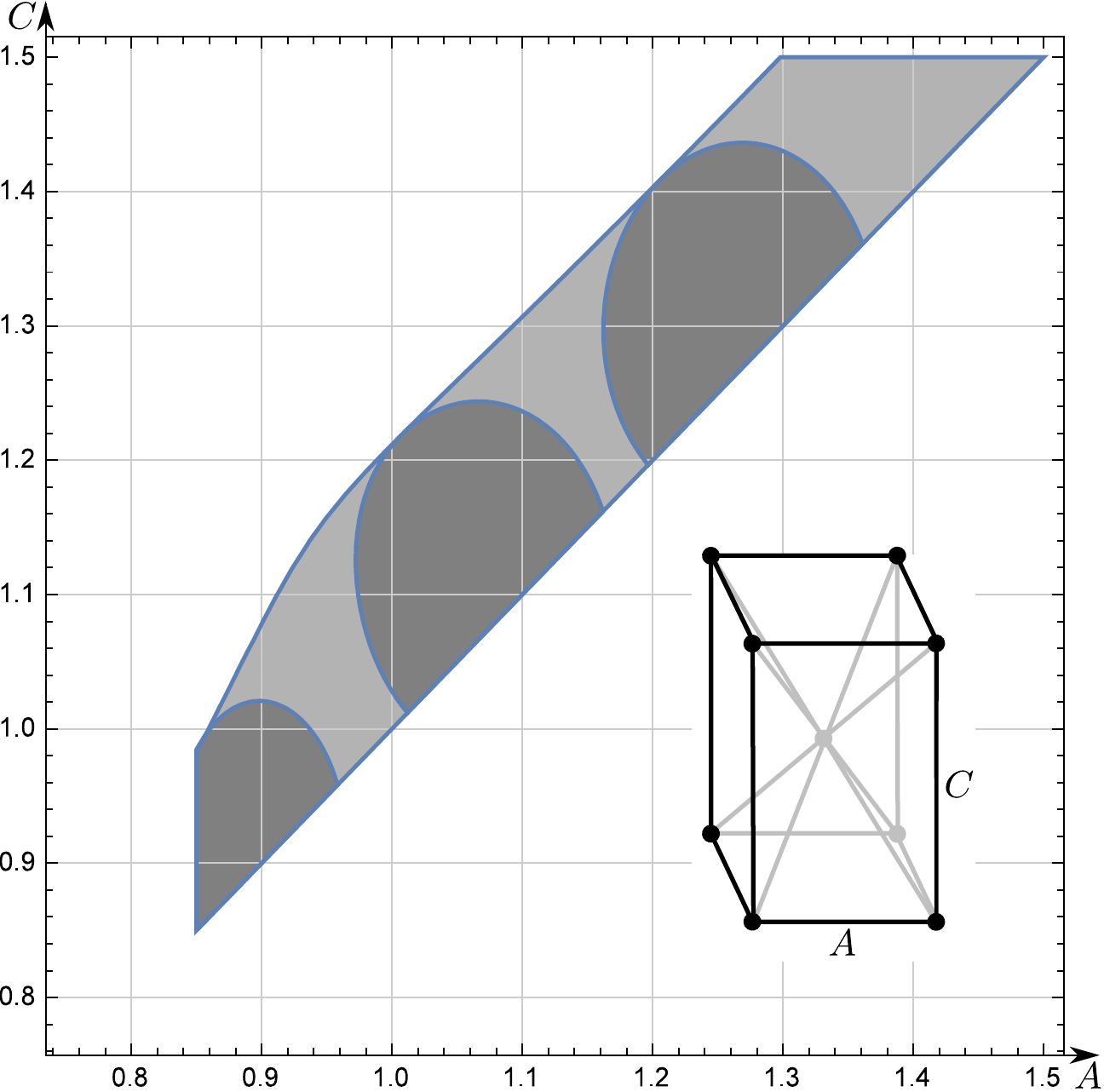}
  \label{Figd1}
\end{minipage}%
\begin{minipage}{.5\textwidth}
  \centering
  \includegraphics[width=.9\linewidth]{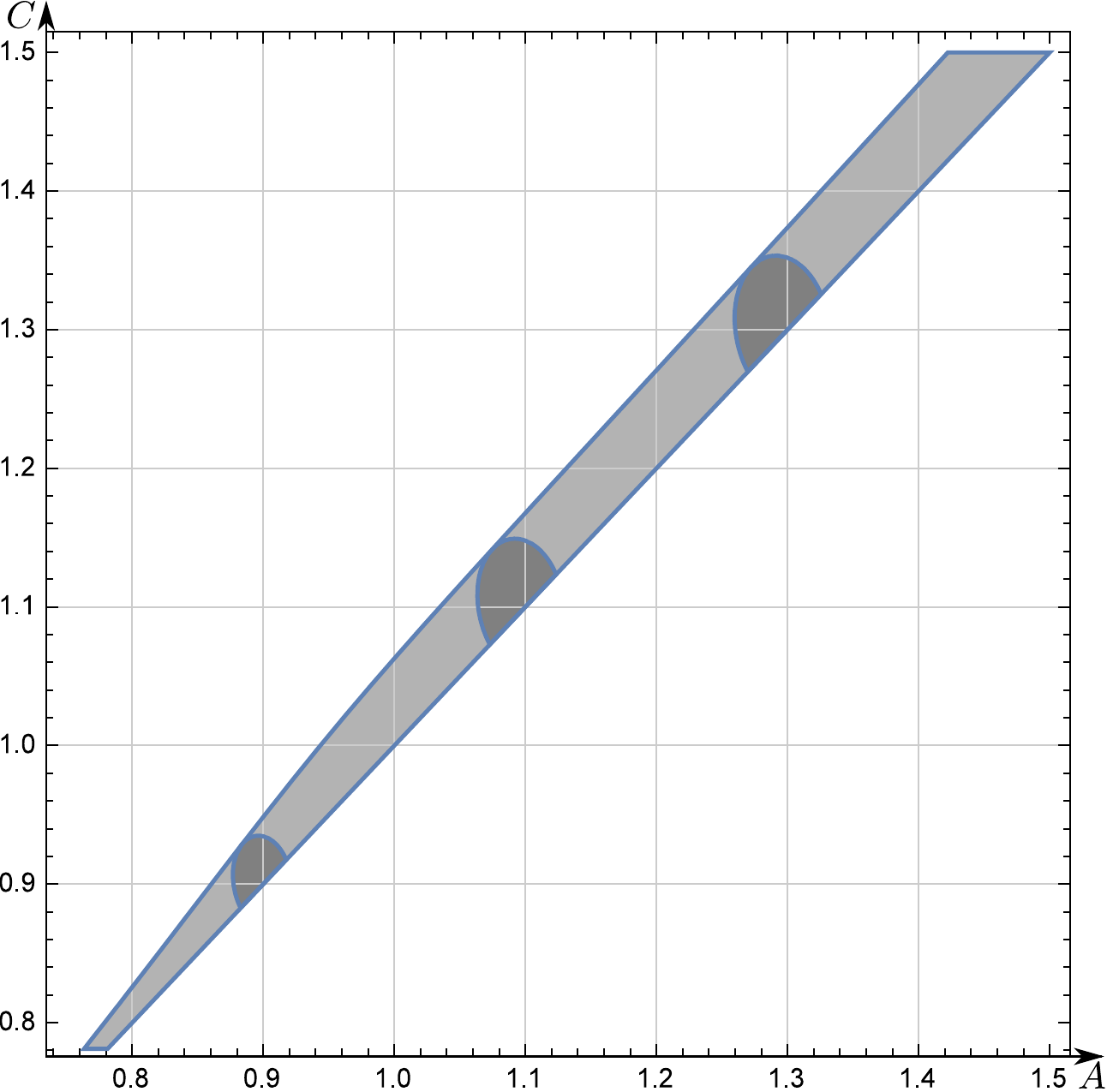}
  \label{Figd2}
\end{minipage}
\caption{Extended $\dm_1$-optimality (left) and $\dm_2$-optimality (right) range for $A$ and $C$. The dark shaded regions are obtained by a perturbation argument about the fixed optimal transformations $H^{\la_0}_{\min}$, $\la_0 = 0.9, 1.1, 1.3$. The light shaded regions are obtained by a perturbation argument about the $(A,C)$-dependent optimal transformation $H_{\min}^{\la(A,C)}$ with $\la(A,C)=0.995\sqrt{AC}$.}\label{FigExt}
\end{figure}
\begin{Remark}
By employing Corollary~\ref{Coralpha} and Corollary~\ref{CorEx} we may extend the range of optimal parameters $A$ and $C$ in Theorem~\ref{ThStab} (cf. Figure~\ref{Figd0}) by shifting the reference point from $A=C=1$ to $A=C=\la$. This enables us to show that the Bain strain remains the $\dm_1$- and $\dm_2$-optimal lattice deformation from fcc-to-bct in a much larger range of lattice parameters $C\geq A$. The shaded regions in Figure~\ref{FigExt} show the values of $A,C$ such that the deformation $H^{\ac}_{\min}$ remains $\dm_1$- and $\dm_2$-optimal. 
\end{Remark}
\begin{proof}
The main idea of the proof is to consider the element $H_\mu^\la=\la H_\mu$ which is optimal in an fcc-to-bcc transformation with volume change (cf. Corollary~\ref{Coralpha}) and ``closest'' to $H^\ac_{\mu}$. Thus in \eqref{dtriangleest} we write
\begin{alignat*}{3} 
\min_{\mu \neq \mu_{\min}}\dm_r({{H_{\mu}^{\ac}}},\id)&\geq \min_{\mu \neq \mu_{\min}}\left(\dm_r({H^\la_{\mu}},\id)-\dm_r(H^\la_\mu,{H^\ac_{\mu}})\right)
\end{alignat*}
and for $\dm_1$ we estimate
\begin{equation} \no
  \min_{\mu \neq \mu_{\min}}\dm_1({H^\ac_{\mu}},\id)\geq m_1^{1,\la} - \max_{\mu \neq \mu_{\min}} \left| \la H_{\mu}-H^\ac_{\mu}\right|\geq m_1^{1,\la} - \la \max_{\mu \neq \mu_{\min}} |\mu F^{-1}|| B-B_{\frac{A}{\la} \frac{C}{\la}} |.\no 
\end{equation}
To obtain the $\dm_1$-optimality in $\SL{1}$, we again use $\max_{\mu \in \SL{1}}|\mu F^{-1}|=3^{3/2}$ so that $m_1^{1,\la}-\la 3^{3/2}|B-B_{\frac{A}{\la} \frac{C}{\la}}|\geq m_{\min,1}^\ac$. Condition \eqref{CondOptSLk} regarding $\dm_1$-optimality outside $\SL{1}$ remains unchanged. Both conditions combined yield the area shown in Figure~\ref{FigExt} (left). Analogously in order to show $\dm_2$-optimality we estimate 
\begin{alignat*}{2}
 \min_{\mu \neq \mu_{\min}}\dm_2({H^\ac_{\mu}},\id)&\geq m_2^{1,\la} - \max_{\mu \neq \mu_{\min}} \left| \la^2 H_{\mu}\T H_{\mu}-{H^\ac_{\mu}}\T H^\ac_{\mu}\right| \\
 &\geq m_2^{1,\la} - \la^2 \max_{\mu \neq \mu_{\min}} |\mu F^{-1}|^2 \left| B\T  B-B_{\frac{A}{\la} \frac{C}{\la}}\T B_{\frac{A}{\la} \frac{C}{\la}}\right|
\end{alignat*}
and again use $\max_{\mu \in \SL{1}}|\mu F^{-1}|=3^{3/2}$ to show $\dm_2$-optimality within $\SL{1}$. The condition to be $\dm_2$-optimal outside $\SL{1}$ remains unchanged. Both conditions combined yield the area shown in Figure~\ref{FigExt} (right).
\end{proof}
\begin{Remark}
The previous estimates can be iterated, i.e. instead of picking the $H^\la_\mu$ that is closest to $H_{\mu}^{\ac}$ one may pick any $H_{\mu}^{\mbox{\tiny{A'C'}}}$ that is in the range indicated in Figure~\ref{FigExt} that is closest to $H_{\mu}^{\ac}$.

If $C\leq A$, following the proof of Theorem~\ref{ThStab}, one finds that the optimal strain becomes
 \begin{equation}\no
   \bar H_{\min}= \begin{pmatrix}
                   2^{1/6}\frac{A+C}{2} & \pm 2^{1/6}\frac{A-C}{2} &0\\
                   \pm 2^{1/6}\frac{A-C}{2} & 2^{1/6}\frac{A+C}{2} &0\\
                   0 & 0& 2^{-1/3}A
                  \end{pmatrix} + \mbox{ its $\PP$ conjugates}
 \end{equation}
 at least if $A$ and $C$ are in the regions specified in the statement of Theorem~\ref{ThStab} with $C\geq A$ replaced by $A\geq C$.
\end{Remark}
\subsection{Terephthalic Acid}\label{subsecTere}
Terephthalic Acid is a material that has two triclinic phases (Type I and II) which are very different from each other (cf. \cite[p.46 ff.]{BJInco}). Thus any lattice transformation necessarily requires large principal stretches and, unlike in the Bain setting, it is not clear what a good candidate for the optimal transformation would be. However, with the help of the proposed framework the $\dm_r$-optimal lattice transformation can easily be determined. The only required input parameters are the lattice parameters of the two triclinic unit cells (cf. \cite[p.388 Table~2]{TereAcid}) listed in Table~\ref{TableTereAcid}.
\begin{table}[h]
\begin{center}
  \begin{tabular}{l*{6}{c}}
Form & $a/A^\circ$ &$b/A^\circ$ &$c/A^\circ$ &$\al/^\circ$ & $\beta/^\circ$ & $\gamma/^\circ$ \\
\hline
I & 7.730 & 6.443&3.749&92.75 & 109.15 & 95.95\\
II & 7.452&6.856 & 5.020 & 116.6 &119.2 & 96.5
\end{tabular}
\end{center}
\caption{Lattice parameters of the triclinic unit cells of Terephthalic Acid}\label{TableTereAcid}
\end{table}

To apply our analysis we first ought to convert the triclinic to the primitive description. 
\begin{Lemma}{(Conversion from triclinic to primitive unit cell)} \label{LemmaConv} \\
The triclinic unit cell with lattice parameters $a,b,c$ and $\al,\beta, \gamma$ generates up to an overall rotation the same lattice as 
\begin{equation} \no
 F=\left(
\begin{array}{ccc}
 a & b \cos \left(\gamma   \right) & c \cos \left(\beta   \right)
   \\
 0 & b \sin \left(\gamma   \right) & c\sin^{-1} \gamma  \left(\cos \left(\alpha  
   \right)-\cos \left(\beta   \right) \cos \left(\gamma  
   \right)\right)  \\
 0 & 0 & c \left(\sin^2 \beta-\sin^{-2} \gamma\left(\cos
   \left(\alpha   \right)-\cos \left(\beta   \right) \cos
   \left(\gamma   \right)\right)^2\right)^{1/2} \\
\end{array}
\right)
\end{equation}
\end{Lemma}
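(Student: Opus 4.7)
The lemma asks for a canonical representative (modulo $\SO$) of the lattice spanned by three vectors $\mathbf{a},\mathbf{b},\mathbf{c}$ of prescribed lengths $a,b,c$ and prescribed pairwise angles $\alpha=\angle(\mathbf{b},\mathbf{c})$, $\beta=\angle(\mathbf{a},\mathbf{c})$, $\gamma=\angle(\mathbf{a},\mathbf{b})$. Since by definition a Bravais lattice $\ml(F)$ is specified only up to the columns of $F$ as vectors in $\R$, and the equivalence relation $\thicksim$ on generators allows us to apply any $R\in\SO$ from the left, the plan is to use this three-dimensional rotational freedom to normalise the frame $\{\mathbf{a},\mathbf{b},\mathbf{c}\}$ and then read off the coordinates directly.

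First I would pick $R\in\SO$ so that $R\mathbf{a}=a\,e_1$; this uses up two of the three rotational degrees of freedom. The remaining freedom is rotation about the $e_1$-axis, and I would exhaust it by demanding $R\mathbf{b}\cdot e_3=0$ with $R\mathbf{b}\cdot e_2\geq 0$, i.e. placing $R\mathbf{b}$ in the upper half of the $(e_1,e_2)$-plane. Writing $R\mathbf{b}=(b_1,b_2,0)^T$, the condition $\mathbf{a}\cdot\mathbf{b}=ab\cos\gamma$ gives $b_1=b\cos\gamma$, and $|R\mathbf{b}|=b$ together with $b_2\geq 0$ gives $b_2=b\sin\gamma$. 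These reproduce the first two columns of $F$.

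For the third column, set $R\mathbf{c}=(c_1,c_2,c_3)^T$. The constraint $\mathbf{a}\cdot\mathbf{c}=ac\cos\beta$ yields $c_1=c\cos\beta$, and $\mathbf{b}\cdot\mathbf{c}=bc\cos\alpha$ combined with the first two coordinates of $R\mathbf{b}$ yields
\[
c_2=\frac{c(\cos\alpha-\cos\beta\cos\gamma)}{\sin\gamma}.
\]
Finally $|R\mathbf{c}|=c$ forces
\[
c_3^{\,2}=c^2\left(\sin^2\beta-\frac{(\cos\alpha-\cos\beta\cos\gamma)^2}{\sin^2\gamma}\right),
\]
matching the expression in the statement. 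This gives $c_3$ up to a sign, and I would fix the sign by choosing $c_3\geq 0$.

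The only subtle point, and the one I would flag as the main obstacle, is justifying the two sign choices (positivity of $b_2$ and of $c_3$) as compatible with a single rotation $R\in\SO$ rather than an improper rotation. Positivity of $b_2$ was arranged by hand using the residual rotation about $e_1$, and does not affect orientation because we still have the freedom to reflect in the $(e_1,e_2)$-plane — but that reflection is not in $\SO$. The clean way out is: any triclinic cell as encoded by a positively oriented generator $F\in\glp$ (cf.\ Definition~\ref{DefBrav}) has $\det F>0$, and since rotations preserve determinant, the coordinates $R\mathbf{a},R\mathbf{b},R\mathbf{c}$ must also form a positively oriented triple; with $b_2>0$ already fixed this forces $c_3>0$, so the positive square root is the correct choice. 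The radicand is non-negative precisely when $\alpha,\beta,\gamma$ are the angles of a realisable triclinic cell, and strict positivity corresponds to $F\in\glp$, closing the argument.
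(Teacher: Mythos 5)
Your proof is correct. The paper's own proof runs the computation in the opposite direction: it simply takes the displayed matrix $F$, observes that its columns $\col{F}=\{f_1,f_2,f_3\}$ satisfy $|f_1|=a$, $|f_2|=b$, $|f_3|=c$ and $\measuredangle(f_1,f_2)=\gamma$, $\measuredangle(f_1,f_3)=\beta$, $\measuredangle(f_2,f_3)=\alpha$, and declares this ``easy to verify,'' leaving implicit the fact that a positively oriented triple of vectors is determined up to a rotation in $\SO$ by its lengths and pairwise angles (equivalently, by its Gram matrix together with the sign of its determinant). You instead construct the normalising rotation explicitly and solve for the entries, which is the same dot-product computation run forwards; the added value of your version is precisely that you make the implicit uniqueness step explicit and handle the two sign ambiguities, using the residual rotation about $e_1$ for $b_2>0$ and the constraint $\det F>0$ from $F\in\glp$ for $c_3>0$. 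Your closing remark that non-negativity of the radicand characterises realisability of the given lattice parameters is also a worthwhile observation that the paper does not record. Either direction suffices for the lemma; yours is the more self-contained.
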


\begin{proof}
Let $\col F = \{f_1,f_2,f_3\}$. It is easy to verify that $|f_1|=a$, $|f_2|=b$, $|f_3|=c$ and that $\measuredangle (f_1,f_2) = \gamma$, $\measuredangle (f_1,f_3) = \beta$, $\measuredangle (f_2,f_3) = \alpha$.
\end{proof}

\begin{Example}{(Primitive cells of Terephthalic Acid)}\\
Application of the previous Lemma to the lattice parameters in Table~\ref{TableTereAcid} leads to
\begin{alignat}{3}\label{TereFG}
 F_I=\left(
\begin{array}{ccc}
 7.730 & -0.668 & -1.230 \\
 0 & 6.408 & -0.309 \\
 0 & 0 & 3.528 \\
\end{array}
\right) \aand F_{II}=\left(
\begin{array}{ccc}
 7.452 & -0.776 & -2.449 \\
 0 & 6.812 & -2.541 \\
 0 & 0 & 3.570 \\
\end{array}
\right)
\end{alignat}
(all measures in $A^\circ$) 
\end{Example}
\begin{Theorem}{(Optimal lattice transformations in Terephthalic Acid)}\\
 The unique equivalence class of $d_1$- and $d_2$-optimal transformations between Terephthalic Acid Form I and Terephthalic Acid Form II has a stretch component given by
 \begin{equation}\label{HminAcid}
  \bar H_{\min}=\left(
\begin{array}{ccc}
 0.820 & -0.125 & -0.072 \\
 -0.125 & 0.994 & -0.146 \\
 -0.072 & -0.146 & 1.329 \\
\end{array}
\right) \end{equation}
 with principal stretches $\nu_1=0.725$, $\nu_2=1.033$ and $\nu_3=1.385$. The minimal distances are given by
 $m_{\min,1}=d_1(\bar H_{\min},\id)=0.474$ and $m_{\min,2}=d_2(\bar H_{\min},\id)=1.035$.
\end{Theorem}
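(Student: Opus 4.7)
The plan is to apply Theorem~\ref{TheCpct} directly with $F = F_I$ and $G = F_{II}$ from \eqref{TereFG}: the theorem reduces the infinite minimisation over $\mu \in \slz$ to a finite search over the compact sets \eqref{Eqmu2} (for $r=1,2$) and \eqref{Eqmu-2}. Unlike the fcc-to-bcc case in Theorem~\ref{ThBain}, where the geometry of the cubic point group allowed analytic lower bounds via the Key Lemma, here the triclinic lattices possess no such symmetry and no clever analytic identity seems available. Hence the proof is genuinely a finite computer-assisted search, made rigorous by the compactness result.

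First I would compute the trivial upper bound $m_{0,r} = d_r(F_{II}F_I^{-1}, \id)$ from the raw data in \eqref{TereFG}; this yields concrete numerical values for the right-hand sides of \eqref{Eqmu2}. Since the bound in \eqref{Eqmu2} scales like $\tinf{F_I}^s/\nu_{\min}^s(F_{II})$ times $(m_{0,r}+1)$, and the lattice parameters in Table~\ref{TableTereAcid} are of order unity in angstroms, the resulting admissible set for $\mu$ is contained in some $\SL{k}$ with small $k$. One then enumerates all $\mu$ in this finite set and tabulates the values $d_r(F_{II} \mu F_I^{-1}, \id)$. The symmetric representative $\bar H_{\min} = \sqrt{H_{\min}^T H_{\min}}$ is extracted via the polar decomposition, and the singular values $\nu_1, \nu_2, \nu_3$ are read off from the eigenvalues of $H_{\min}^T H_{\min}$. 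A direct numerical check then confirms the values of $\bar H_{\min}$, the principal stretches, and $m_{\min,1}, m_{\min,2}$ as stated.

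The main obstacle is ensuring the enumeration is exhaustive and that the claim of \emph{unique} equivalence class is justified despite finite-precision arithmetic. Here I would follow the recipe outlined in the remark preceding Section~\ref{subsecBain}: verify that the gap $\Delta$ between the smallest and second-smallest values of $d_r(H_\mu, \id)$ across the candidate set exceeds machine epsilon by a wide margin, so that the identification of the minimiser is unambiguous. The excerpt already announces $\Delta > 0.015$ for this case, which is many orders of magnitude above the double-precision floor, so uniqueness of $[H_{\min}]$ follows. A final consistency check is that the same $[H_{\min}]$ is obtained for both $r=1$ and $r=2$, which is plausible because the principal stretches $(0.725, 1.033, 1.385)$ are all moderate and hence $d_1$ and $d_2$ give similar rankings; this coincidence is itself established by the computation rather than by any structural argument.

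Finally, I would note that since the computation is performed over $\slz$ (integer matrices) and the filter in \eqref{Eqmu2} depends only on $\tinf{\mu}$, the search is entirely discrete apart from the evaluation of $d_r$, so the only floating-point step is the singular value decomposition of the candidate transformations $H_\mu$. This makes the computation essentially rigorous, and the concrete output \eqref{HminAcid} together with the stated principal stretches and distances completes the proof.
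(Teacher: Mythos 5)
Your proposal matches the paper's proof in all essentials: both apply Theorem~\ref{TheCpct} with $F=F_I$, $G=F_{II}$, bound $m_{0,r}$ and the quantities $\tinf{F_I}$, $\nu_{\min}(F_{II})$ numerically to confine the optimal $\mu$ to a small set (the paper makes this concrete as $\mathrm{SL}^3(3,\mathbb{Z})$, using that $\mu\notin\mathrm{SL}^{k-1}(3,\mathbb{Z})$ forces $\tinf{\mu}\geq\sqrt{k^2+1}$), and then conclude by exhaustive computer search with the gap $\Delta>0.015$ guaranteeing the minimiser is unambiguous. This is the same argument; no gaps.
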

\begin{proof}
We apply Theorem~\ref{TheCpct} with $F=F_I$ and $G=F_{II}$. We calculate $\tinf{F_I}=|F_Ie_1|=7.730<7.8$, $\nu_{\min}(F_{II})=3.076>3$, $m_{0,1}=0.529<0.55$ and $m_{0,2}=1.197<1.2$. Further we note that if $\mu \in \slz \backslash \SL{k-1}$ then $\tinf{\mu}\geq \sqrt{k^2+1}$. Therefore by \eqref{Eqmu2} any $\dm_1$-optimal $\mu$ satisfies
\begin{alignat*}{2}
 \tinf{\mu}\leq \frac{\tinf{F}}{\nu_{\min}(G)}(m_{0,1}+1) <\frac{7.8}{3}(0.55+1)=4.03<\sqrt{4^2+1}
\end{alignat*}
and is thus contained in $\SL{3}$ and by \eqref{Eqmu2} any $\dm_2$-optimal $\mu$ satisfies
\begin{alignat*}{2}
 \tinf{\mu}^2 \leq \frac{\tinf{F}^2}{\nu_{\min}^2(G)}(m_{0,2}+1) <\frac{7.8^2}{3^2}(1.2+1)=14.872
\end{alignat*}
and is thus also contained in $\SL{3}$. With the help of a computer we check that within the set $\SL{3}$ the minimum is in both cases attained at 
\begin{equation} \no
 \mu_{\min}=\left(
\begin{array}{ccc}
 0 & 1 & 0 \\
 1 & 0 & 0 \\
 1 & 1 & -1 \\
\end{array}
\right)
\end{equation}
with corresponding $\bar H_{\min}=\sqrt{H\T _{\mu_{\min}}H_{\mu_{\min}}}$ as in \eqref{HminAcid}, $m_{\min,1}=\dm_1(\bar H_{\min},\id)=0.474$ and $m_{\min,2}=\dm_2(\bar H_{\min},\id)=1.035$.
\end{proof}
\begin{Remark}
 We have shown that the $d_1$- and $d_2$-optimal transformations from Form I to Form II of Terephthalic Acid are the same. However the $d_{-2}$-optimal transformation is different and given by
 \begin{equation} \no
  \bar H_{\min,\, -2}=\left(
\begin{array}{ccc}
 0.852 & -0.119 & -0.018 \\
 -0.119 & 0.950 & -0.197 \\
 -0.018 & -0.197 & 1.346 \\
\end{array}
\right) \end{equation}
with principal stretches $\nu_1=0.743, \, \nu_2=0.977$ and $\nu_3=1.429$. As expected the smallest principal stretch $\nu_1$ is bigger than before since $d_{-2}$ penalises contractions significantly more than expansions. To obtain the required analytical bounds one calculates $\tinf{F_{II}}=|F_{II}e_1|=7.452<7.46$, $\nu_{\min}(F_{I})= 3.464>3.45$ and $m_{0,-2}=1.080<1.1$ to get $\tinf{\mu^{-1}}^2<10$ (cf. \eqref{Eqmu-2}) and thus the optimal $\mu$ lies in $\SL{-2}$. 
\end{Remark}
\begin{Remark}
Even though \cite{Sherry} also uses the distance $d_{-2}$, the reported strains are different. However, this discrepancy is not a consequence of using sublattices but rather a consequence of using the lattice parameters from \cite[p.388 Table~1]{TereAcid} for Terephthalic Acid II\footnote{$a=9.54A^\circ$, $b=5.34A^\circ$, $c=5.02A^\circ$,
$\alpha= 86.95^\circ, \beta= 134.65^\circ, \gamma = 94.8^\circ$} and the lattice parameters from \cite[p.388 Table~2]{TereAcid} for Terephthalic Acid I. In particular, using these mixed input parameters in ``OptLat'' (cf. \hyperref[Matlab]{Appendix}) yields identical values.
\end{Remark}

\section{Concluding remarks}
The present paper provides a rigorous proof for the existence of an optimal lattice transformation between any two given Bravais lattices with respect to a large number of optimality criteria. Furthermore, a precise \hyperref[Mathematica]{algorithm} and a \hyperref[Matlab]{GUI} to determine this optimal transformation is provided (cf. \hyperref[append]{Appendix}). As possible applications, the optimal transformation in steels, that is the transformation from fcc-to-bcc/bct, and in Terephthalic Acid were determined. Through Theorem~\ref{TheCpct} and with the help of the provided algorithm/programme one is able to rigorously determine the optimal phase transformation in any material undergoing a displacive phase transformation from one Bravais lattice to another. 

If the parent or product phases are multilattices, the proposed framework is not \emph{a priori} applicable. Nevertheless, one may still consider Bra\-vais sublattices of these multilattices and proceed as before. The choice of these sublattices may come from physical consideration. However, in order to rigorously determine the optimal transformation between two given multilattices one would need to measure the movement of \emph{all} atoms consistently, that is one would need to take into account both the overall periodic deformation of the unit cell and the shuffle movement of atoms within the unit cell. Establishing such a criterion would be of great interest but lies outside the scope of the present paper.

\appendix
\section*{Appendix}\label{append}
\subsection*{A. Mathematica}\label{Mathematica}
The following Mathematica code\footnote{The original .nb file can be found online under \href{http://solids.maths.ox.ac.uk/programs/OptLat.nb}{http://solids.maths.ox.ac.uk/programs/OptLat.nb}} determines for a given $k\in \mathbb{N}$ the optimal lattice transformation $H:\ml(F) \ra \ml(G)$ within the set $\{H_\mu=G\mu F^{-1}:\, \mu\in \SL{k}\}$ for any given $F,G \in \glp$ and for any distance measure $d_r(H,\id)$, $r>0$. The case $r<0$ is analogous.

For the transformation from fcc-to-bcc, $F$ and $G$ would be given by \eqref{FB} and for the transformation from Terephthalic Acid I to II, $F$ and $G$ would be given by \eqref{TereFG}.

Firstly, we generate the set $\SL{k}${\fontfamily{qcr}\selectfont  (=SL)}:

\noindent{\fontfamily{qcr}\selectfont \small
SL = Select[Flatten[Table[\{a,b,c,d,e,f,g,h,i\},\\ \hspace*{1.2cm}\{a,-k,k\},\{b,-k,k\},\{c,-k,k\},\{d,-k,k\},\{e,-k,k\},\{f,-k,k\},\\ \hspace*{1.2cm}\{g,-k,k\},\{h,-k,k\},\{i,-k,k\}],8],Det[Partition[\#,3]]==1\&];
}

Next we generate a list{\fontfamily{qcr}\selectfont  (=distlist)} of all values of $\dm_r(H_\mu,\id)$ for $\mu\in \SL{k}$:\\
\noindent{\fontfamily{qcr}\selectfont \small
 Hmu = Function[mu,G.Partition[mu,3].Inverse[F]];\\
distr = Function[mu,Norm[SingularValueList[Hmu[mu]]\string^r-\{1,1,1\}]];\\
distlist = distr/@SL;}

Then we generate a list{\fontfamily{qcr}\selectfont  (=poslist)} of all the positions of $\mu$'s in $\SL{k}$ that give rise to the minimal deformation distance: \\
\noindent{\fontfamily{qcr}\selectfont \small
poslist = Flatten[Position[distlist,RankedMin[distlist,1]],1];}

Further we calculate the minimal deformation distance $m_0$, the second to minimal deformation distance $m_1$ and return their numerical difference $\Delta=m_1-m_0${\fontfamily{qcr}\selectfont  (=delta)}:\\
\noindent{\fontfamily{qcr}\selectfont  \small
m0 = distlist[[poslist[[1]]]]; m1 = Sort[distlist][[Length[poslist]+1]];\\
delta = N[m1-m0]}

Finally we return a list of all $\mu$'s that give rise to an optimal deformation $H_\mu$ and a list of all optimal $H_\mu$'s:\\
{\fontfamily{qcr}\selectfont  \small 
SL[[poslist]] \\
Hmu/@SL[[poslist]]}
\subsection*{B. MATLAB}\label{Matlab}
A Graphical User Interface (GUI) called ``OptLat'' can either be found on \mbox{MATLAB} \href{http://uk.mathworks.com/matlabcentral/fileexchange/55554-optlat}{File Exchange}\footnote{\href{http://uk.mathworks.com/matlabcentral/fileexchange/55554-optlat}{http://uk.mathworks.com/matlabcentral/fileexchange/55554-optlat}} (requires MATLAB) or downloaded directly as a standalone \href{http://solids.maths.ox.ac.uk/programs/OptLat.exe}{\mbox{Windows} application}\footnote{\href{http://solids.maths.ox.ac.uk/programs/OptLat.exe}{http://solids.maths.ox.ac.uk/programs/OptLat.exe}}. 
\vspace*{2em}

{\noindent {\bf Acknowledgements:}
We would like to thank R. D. James and X. Chen for helpful discussions. The research of A. M. leading to these results has received funding from the European Research Council under the European Union's Seventh Framework Programme (FP7/2007-2013) / ERC grant agreement n$^\circ\, 291053$.}

\end{document}